  \theoremstyle{definition}
  \newtheorem{defn}{\protect\definitionname}
  \theoremstyle{plain}
  \newtheorem{thm}{\protect\theoremname}
  \theoremstyle{plain}
  \newtheorem{lem}{\protect\lemmaname}
  \theoremstyle{plain}
  \newtheorem{fact}{\protect\factname}
  \theoremstyle{remark}
  \newtheorem{rem}{\protect\remarkname}
  \theoremstyle{definition}
  \newtheorem{example}{\protect\examplename}
\author{
    \IEEEauthorblockN{Dionysios S. Kalogerias, \IEEEmembership{Student Member, IEEE}
             and Athina P. Petropulu$^{\ast}$, \IEEEmembership{Fellow, IEEE}}
    %\vspace{2 mm}
    %\IEEEauthorblockN{Dionysios S. Kalogerias}
    %\and
    %\IEEEauthorblockN{Athina P. Petropulu\\}
%    \IEEEauthorblockA{Department of Electrical and Computer   Engineering,
%              Rutgers, The State University of New Jersey}
%    \\\{d.kalogerias, athinap\}@rutgers.edu
   \thanks{D. S. Kalogerias and A. P. Petropulu are with the Department of Electrical \& Computer Engineering, Rutgers, The State University of New Jersey, 94 Brett Rd, Piscataway, 08854, NJ, USA. e-mail: \{d.kalogerias, athinap\}@rutgers.edu}
  % \thanks{e-mail: \{d.kalogerias, athinap\}@rutgers.edu}
    \thanks{This work is supported by the NSF under Grant CNS-1239188, and the ONR under grant N00014-12-1-0036.}
%    \IEEEauthorblockA{\IEEEauthorrefmark{2}Institution2
%    \\\{2, 3\}@def.com}
}
\providecommand{\definitionname}{Definition}
\providecommand{\lemmaname}{Lemma}
\providecommand{\theoremname}{Theorem}
\providecommand{\definitionname}{Definition}
\providecommand{\examplename}{Example}
\providecommand{\factname}{Fact}
\providecommand{\lemmaname}{Lemma}
\providecommand{\remarkname}{Remark}
\providecommand{\theoremname}{Theorem}
\providecommand{\definitionname}{Definition}
\providecommand{\examplename}{Example}
\providecommand{\factname}{Fact}
\providecommand{\lemmaname}{Lemma}
\providecommand{\remarkname}{Remark}
\providecommand{\theoremname}{Theorem}
\begin{document}

\title{Matrix Completion in Colocated MIMO Radar:\\
 Recoverability, Bounds \& Theoretical Guarantees}
\maketitle
\begin{abstract}
It was recently shown that low rank matrix completion theory can be
employed for designing new sampling schemes in the context of MIMO
radars, which can lead to the reduction of the high volume of data
typically required for accurate target detection and estimation. Employing
random samplers at each reception antenna, a partially observed version
of the received data matrix is formulated at the fusion center, which,
under certain conditions, can be recovered using convex optimization.
This paper presents the theoretical analysis regarding the performance
of matrix completion in colocated MIMO radar systems, exploiting the
particular structure of the data matrix. Both Uniform Linear Arrays
(ULAs) and arbitrary $2$-dimensional arrays are considered for transmission
and reception. Especially for the ULA case, under some mild assumptions
on the directions of arrival of the targets, it is explicitly shown
that the coherence of the data matrix is both asymptotically and approximately
optimal with respect to the number of antennas of the arrays involved
and further, the data matrix is recoverable using a subset of its
entries with minimal cardinality. Sufficient conditions guaranteeing
low matrix coherence and consequently satisfactory matrix completion
performance are also presented, including the arbitrary $2$-dimensional
array case. \end{abstract}
\begin{IEEEkeywords}
Matrix Completion, Subspace Coherence, Strong Incoherence Property,
Colocated MIMO Radar, Array Processing
\end{IEEEkeywords}

\section{Introduction}

\IEEEPARstart{L}{ow} rank matrix completion aims at the reconstruction
of a low rank data matrix using a set of limited observations of its
entries. Recently, this type of estimation problems have received
a lot of attention, because they arise in a variety of real world
applications, such as collaborative filtering \cite{NetFlixPrize},
sensor network positioning/localization \cite{Drineas2006} and remote
sensing \cite{SunMIMORadarMC}, just to name a few. Several theoretical
results have been reported \cite{Candes&Recht2009,Candes&Tao2010,KeshavanMontanariMC},
and algorithms have been proposed (see, for example, \cite{CaiCandesSVT}),
providing the essential tools for solving the matrix completion problem
reliably and robustly.

Most recently, matrix completion has been proposed as means for effectively
reducing the volume of data required for target detection and estimation
in MIMO radars \cite{HaimovichRadarIDEA2004} and more generally in
array processing systems \cite{WengMC_ARRAYS,SunMIMORadarMC,SunPetropuluAEROSPACE}.
In \cite{SunPetropuluAEROSPACE}, a collocated MIMO radar \cite{LiStoicaColocated2007}
scenario is considered, in which the transmission and reception antennas
are organized in Uniform Linear Arrays (ULAs). Each transmission antenna
transmits a narrowband waveform over a predefined carrier frequency,
with the waveforms between different transmission antennas being orthogonal.
At each reception antenna, after demodulation, matched filtering is
performed with the transmit waveforms \cite{SunPetropuluAEROSPACE},
extracting statistics and formulating a matrix (referred to here as
the data matrix), which can be used by standard array processing methods
for target detection and parameter estimation. For a sufficiently
large number of transmission and reception antennas and a small number
of targets, the data matrix is low-rank. Therefore, it can be recovered
from a small number of its entries via matrix completion. This implies
that, at each reception antenna, matched filtering does not need to
be performed with all transmit waveforms, but rather with a small
number of randomly selected ones from the waveform dictionary. The
general conditions for the applicability of matrix completion were
stated in \cite{SunMIMORadarMC,SunPetropuluAEROSPACE} and their validity
was confirmed via simulations.

In this paper, we consider the matrix completion enabled MIMO radar
system proposed in \cite{SunPetropuluAEROSPACE}, and exploiting the
special structure of the data matrix, we derive insightful theoretical
results regarding its \textit{coherence}, an important quantity closely
related to its recoverability via matrix completion. Our contribution
is summarized as follows: 
\begin{enumerate}
\item We show that, for ULA configurations, and under mild assumptions on
the Directions-Of-Arrival (DOAs) of the targets, the coherence of
the data matrix is both asymptotically optimal with respect to the
number of transmission and reception antennas, and nearly optimal
for a sufficiently large but finite number of transmission and reception
antennas. 
\item Under common assumptions regarding the range of the pairwise differences
of the target angles and the spacing of the antennas of the ULAs involved,
we derive a simple sufficient condition, which essentially controls
the coherence of the data matrix, as well as the rate of convergence
to its optimum value. In all cases, we provide explicit and computationally
tractable coherence bounds, with all results holding \textit{almost
surely}. 
\item Invoking the recent theoretical results in low rank matrix completion
presented in \cite{Candes&Tao2010}, we derive asymptotic bounds on
the number of observations required for exact matrix completion, showing
that, in fact, the matrix under consideration can be reconstructed
using a subset of its entries with minimal cardinality. 
\item We then generalize our coherence results for the more general case
of arbitrary $2$-dimensional transmission and reception arrays, showing
that the coherence of the corresponding data can be easily and tightly
bounded by simply looking at the values of the squared magnitude of
a $2$-dimensional complex function over a properly chosen subset
of the $x/y$ plane, directly related to the target angles. Essentially,
by choosing a candidate pair of transmission/reception array topologies,
this general result produces coherence bounds in a straightforward
way, also holding true almost surely. 
\end{enumerate}
An important implication of the results presented in this paper is
that the performance of matrix completion in colocated MIMO radar
systems is completely independent of any other system design specification
or target characteristic, except for the transmission/reception array
topology, the target angles and possibly the wavelength of the carrier.
That is, as far as matrix completion is concerned, the values of various
important quantities, such as target reflection coefficients, Doppler
shifts, pulse repetition intervals and the choice of the waveform
dictionary are \textit{completely unconstrained}. This fact makes
matrix completion very appealing for reducing the number of samples
needed for accurate detection and estimation in real world colocated
MIMO radar systems. Also, compared to Compressive Sensing (CS) based
MIMO radar, matrix completion has at least the same advantage in terms
of sample reduction, justified by the strong theoretical guarantees
presented here, while at the same time it avoids the angle discretization
and basis mismatch issues inherent in CS-based approaches \cite{YuPetropuluPoor2010,NehoraiMIMORadar2011,Rossi2013}.

The paper is organized as follows. In Section II, we briefly introduce
the required background in both noiseless and noisy matrix completion.
We also restate the problem formulation for the MIMO radar case in
a slightly more general form compared to \cite{SunPetropuluAEROSPACE}.
In Section III, we present our coherence and recoverability results
for ULA transmitter/receiver pairs, complete with detailed proofs
and explanations. In Section IV, we present the generalization to
the aforementioned coherence results for arbitrary $2$-dimensional
transmission/reception arrays, along with an instructive example.
In Section IV, we state and prove a result for some special cases
of spatial target configurations, completing our theoretical analysis.
Finally, in Section V, we discuss further implications of our previously
stated results and we present some simulations, validating their correctness.

\section{Background}

\subsection{Noiseless Matrix Completion:\protect \protect \\
 Problem Statement and Recoverability Conditions}

Consider a generic complex matrix $\mathbf{M}\in\mathbb{C}^{N_{1}\times N_{2}}$
of rank $r$, whose compact Singular Value Decomposition (SVD) is
given by $\mathbf{M}=\mathbf{U}\boldsymbol{\Sigma}\mathbf{V}^{\boldsymbol{\mathit{H}}}\equiv\sum_{i\in\mathbb{N}_{r}^{+}}\sigma_{i}\left(\mathbf{M}\right)\mathbf{u}_{i}\mathbf{v}_{i}^{\boldsymbol{\mathit{H}}}$
and with column and row subspaces denoted as $U$ and $V$ respectively,
spanned by the sets $\left\{ \mathbf{u}_{i}\in\mathbb{C}^{N_{1}\times1}\right\} _{i\in\mathbb{N}_{r}^{+}}$
and $\left\{ \mathbf{v}_{i}\in\mathbb{C}^{N_{2}\times1}\right\} _{i\in\mathbb{N}_{r}^{+}}$,
respectively.

Let $\mathcal{P}\left(\mathbf{M}\right)\in\mathbb{C}^{N_{1}\times N_{2}}$
denote an entrywise sampling of $\mathbf{M}$. In all the analysis
that follows, we will adopt the theoretical frameworks presented in
\cite{Candes&Recht2009} and \cite{Candes&Tao2010}, according to
which one hopes to reconstruct $\mathbf{M}$ from $\mathcal{P}\left(\mathbf{M}\right)$
by solving the convex program 
\begin{equation}
\begin{array}{ll}
\mathrm{minimize} & \left\Vert \mathbf{X}\right\Vert _{*}\\
\mathrm{subject\, to} & \mathbf{X}\left(i,j\right)=\mathbf{M}\left(i,j\right),\quad\forall\,\left(i,j\right)\in\boldsymbol{\Omega}
\end{array},\label{eq:convex1}
\end{equation}
where the set $\boldsymbol{\Omega}$ contains all matrix coordinates
corresponding to the observed entries of $\mathbf{M}$ (contained
in $\mathcal{P}\left(\mathbf{M}\right)$) and where $\left\Vert \mathbf{X}\right\Vert _{*}$
represents the nuclear norm of $\mathbf{X}$. In the following, we
will refer to \eqref{eq:convex1} as the Matrix Completion (MC) problem.

Also in \cite{Candes&Recht2009}, the authors introduce the notion
of \textit{subspace coherence}, in order to derive specific conditions
under which the solution of \eqref{eq:convex1} coincides with $\mathbf{M}$.
The formal definition of subspace coherence follows, in a slightly
more expanded form compared to the original definition stated in \cite{Candes&Recht2009}. 
\begin{defn}
\textbf{(Subspace Coherence \cite{Candes&Recht2009})} Let $U\equiv\mathbb{C}^{r}\subseteq\mathbb{C}^{N}$
be a subspace spanned by the set of orthonormal vectors $\left\{ \mathbf{u}_{i}\in\mathbb{C}^{N\times1}\right\} _{i\in\mathbb{N}_{r}^{+}}$.
Also, define the matrix $\mathbf{U}\triangleq\left[\mathbf{u}_{1}\,\mathbf{u}_{2}\,\ldots\,\mathbf{u}_{r}\right]\in\mathbb{C}^{N\times r}$
and let $\mathbf{P}_{U}\triangleq\mathbf{U}\mathbf{U}^{\boldsymbol{\mathit{H}}}\in\mathbb{R}^{N\times N}$
be the orthogonal projection onto $U$. Then, the coherence of $U$
with respect to the standard basis $\left\{ \mathbf{e}_{i}\right\} _{i\in\mathbb{N}_{N}^{+}}$
is defined as 
\begin{flalign}
\mu\left(U\right) & \triangleq\frac{N}{r}\sup_{i\in\mathbb{N}_{N}^{+}}\left\Vert \mathbf{P}_{U}\mathbf{e}_{i}\right\Vert _{2}^{2}\nonumber \\
 & \equiv\frac{N}{r}\sup_{i\in\mathbb{N}_{N}^{+}}\sum_{k\in\mathbb{N}_{r}^{+}}\left|\mathbf{U}\left(i,k\right)\right|^{2}\in\left[1,\dfrac{N}{r}\right].
\end{flalign}

\end{defn}
As explained in \cite{Candes&Recht2009}, an intuitive motivation
for defining subspace coherence stems from the fact that, in some
sense, the singular vectors of the matrix under consideration need
to be ``sufficiently spread'', or more precisely, need to be uncorrelated
to the standard basis, in order for the matrix to be fully recoverable
using only a small number of its entries.

The following assumptions regarding the subspaces $U$ and $V$ are
of particular importance \cite{Candes&Recht2009}.\\
 \ \\
 $\mathbf{A0}\qquad\max\left\{ \mu\left(U\right),\mu\left(V\right)\right\} \leq\mu_{0}\in\mathbb{R}_{++}$.\\
 \ \\
 $\mathbf{A1}\qquad\left\Vert \sum_{i\in\mathbb{N}_{r}^{+}}\mathbf{u}_{i}\mathbf{v}_{i}^{\boldsymbol{\mathit{H}}}\right\Vert _{\infty}\leq\mu_{1}\sqrt{\dfrac{r}{N_{1}N_{2}}},\quad\mu_{1}\in\mathbb{R}_{++}$.\\
 \ \\
 Indeed, if the constants $\mu_{0}$ and $\mu_{1}$ associated with
the singular vectors of a matrix $\mathbf{M}$ are known, the following
theorem holds. 
\begin{thm}
\textbf{\textup{(Exact MC \cite{Candes&Recht2009})}} Let $\mathbf{M}\in\mathbb{C}^{N_{1}\times N_{2}}$
be a matrix of rank $r$ obeying $\mathbf{A0}$ and $\mathbf{A1}$
and set $N\triangleq\max\left\{ N_{1},N_{2}\right\} $. Suppose we
observe $m$ entries of $\mathbf{M}$ with locations sampled uniformly
at random. Then there exist constants $C$, $c$ such that if 
\begin{equation}
m\geq C\max\left\{ \mu_{1}^{2},\mu_{0}^{1/2}\mu_{1},\mu_{0}N^{1/4}\right\} Nr\beta\log N
\end{equation}
for some $\beta>2$, the minimizer to the program (\ref{eq:convex1})
is unique and equal to $\mathbf{M}$ with probability at least $1-cN^{-\beta}$.
For $r\leq\mu_{0}^{-1}N^{1/5}$ this estimate can be improved to 
\begin{equation}
m\geq C\mu_{0}N^{6/5}r\beta\log N,
\end{equation}
with the same probability of success. 
\end{thm}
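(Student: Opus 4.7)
The strategy I would follow is the standard dual certificate approach for nuclear norm minimization. By convex duality applied to the program (\ref{eq:convex1}), the matrix $\mathbf{M}$ is the unique optimal solution provided one can exhibit a ``dual certificate'' $\mathbf{Y}\in\mathbb{C}^{N_{1}\times N_{2}}$ supported on $\boldsymbol{\Omega}$ such that $\mathcal{P}_{T}\left(\mathbf{Y}\right)=\sum_{i\in\mathbb{N}_{r}^{+}}\mathbf{u}_{i}\mathbf{v}_{i}^{\boldsymbol{\mathit{H}}}$ and $\left\Vert \mathcal{P}_{T^{\perp}}\left(\mathbf{Y}\right)\right\Vert _{op}<1$, where $T\subseteq\mathbb{C}^{N_{1}\times N_{2}}$ denotes the tangent space at $\mathbf{M}$ to the manifold of rank-$r$ matrices, spanned by $\mathbf{u}_{i}\mathbf{w}^{\boldsymbol{\mathit{H}}}$ and $\mathbf{z}\mathbf{v}_{i}^{\boldsymbol{\mathit{H}}}$, and $\mathcal{P}_{T}$, $\mathcal{P}_{T^{\perp}}$ are the corresponding orthogonal projectors. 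Before starting, I would replace the ``$m$ entries sampled uniformly without replacement'' model by the essentially equivalent Bernoulli/with-replacement model (paying only constants), so that the resulting restriction operator $\mathcal{R}_{\boldsymbol{\Omega}}$ becomes a sum of i.i.d.\ rank-one operators amenable to concentration.

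With this setup in hand, I would use the ansatz $\mathbf{Y}=\mathcal{R}_{\boldsymbol{\Omega}}\left(\mathcal{P}_{T}\mathcal{R}_{\boldsymbol{\Omega}}\mathcal{P}_{T}\right)^{-1}\left(\sum_{i\in\mathbb{N}_{r}^{+}}\mathbf{u}_{i}\mathbf{v}_{i}^{\boldsymbol{\mathit{H}}}\right)$, which automatically satisfies the interpolation condition on $T$ once the intermediate operator is invertible. The technical analysis then has three pieces. First, I would show that $\left\Vert \mathcal{P}_{T}\mathcal{R}_{\boldsymbol{\Omega}}\mathcal{P}_{T}-\mathcal{P}_{T}\right\Vert _{op}\leq1/2$ with high probability; this uses assumption $\mathbf{A0}$ together with a non-commutative Bernstein/Chernoff inequality for sums of independent rank-one self-adjoint operators, and is what forces the $\mu_{0}Nr\log N$ scaling. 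Second, I would expand the inverse via the Neumann series $\left(\mathcal{P}_{T}\mathcal{R}_{\boldsymbol{\Omega}}\mathcal{P}_{T}\right)^{-1}=\sum_{k\geq0}\left(\mathcal{P}_{T}-\mathcal{P}_{T}\mathcal{R}_{\boldsymbol{\Omega}}\mathcal{P}_{T}\right)^{k}\mathcal{P}_{T}$ and bound $\left\Vert \mathcal{P}_{T^{\perp}}\left(\mathbf{Y}\right)\right\Vert _{op}$ term by term, invoking both $\mathbf{A0}$ and $\mathbf{A1}$ to control the entries of each summand. Third, I would collect the tail bounds via a union bound and optimize constants to yield the stated $N^{-\beta}$ failure probability.

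The hardest step by far is the operator-norm control of $\mathcal{P}_{T^{\perp}}\left(\mathbf{Y}\right)$. Each Neumann-series term is a random matrix of the form $\mathcal{P}_{T^{\perp}}\mathcal{R}_{\boldsymbol{\Omega}}\left(\mathcal{P}_{T}-\mathcal{P}_{T}\mathcal{R}_{\boldsymbol{\Omega}}\mathcal{P}_{T}\right)^{k}\left(\sum_{i}\mathbf{u}_{i}\mathbf{v}_{i}^{\boldsymbol{\mathit{H}}}\right)$, and sharp tail estimates require a Rudelson-type moment bound combined with careful bookkeeping of how $\mu_{0}$ and $\mu_{1}$ propagate through iterated applications of the sampling operator. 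The appearance of $\max\{\mu_{1}^{2},\mu_{0}^{1/2}\mu_{1},\mu_{0}N^{1/4}\}$ in the first regime comes out of balancing the different terms in this moment calculation; the refined bound $C\mu_{0}N^{6/5}r\beta\log N$ valid for $r\leq\mu_{0}^{-1}N^{1/5}$ is obtained by truncating the Neumann series at a higher order so as to trade off the $N^{1/4}$ factor against additional logarithmic terms, and then re-optimizing. I would expect essentially all the creative work of the proof to lie in that moment bound; everything else is convex analysis, Neumann series, and union bounding.
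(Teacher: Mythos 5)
This theorem is stated in the paper as an imported result, cited verbatim from \cite{Candes&Recht2009}; the paper offers no proof of its own, so there is nothing internal to compare against. Your outline accurately reproduces the architecture of the original proof in that reference --- the dual-certificate characterization, the Bernoulli sampling reduction, the least-squares ansatz $\mathcal{R}_{\boldsymbol{\Omega}}\left(\mathcal{P}_{T}\mathcal{R}_{\boldsymbol{\Omega}}\mathcal{P}_{T}\right)^{-1}$ with Neumann-series expansion, the Rudelson-type moment bounds, and the origin of both the $\max\left\{ \mu_{1}^{2},\mu_{0}^{1/2}\mu_{1},\mu_{0}N^{1/4}\right\}$ factor and the $N^{6/5}$ refinement from handling more terms of the series individually. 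Be aware, though, that as written this is a correct roadmap rather than a proof: essentially all of the content lives in the moment estimates you defer to, and a referee would expect those to be carried out or explicitly attributed to the source rather than sketched.
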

If a matrix $\mathbf{M}$ obeys the assumptions $\mathbf{A0}$ and
$\mathbf{A1}$ with parameters $\mu_{0}$ and $\mu_{1}$, we will
say that it obeys the \textit{incoherence property} or, equivalently,
that it is \textit{incoherent} with parameters $\mu_{0}$ and $\mu_{1}$.
%  In the following,
%we will use this terminology only occasionally. [DENNIS: what is the point of this comment? better delete]

In \cite{Candes&Tao2010}, the respective authors introduce a stricter
assumption on the singular vectors of a generic matrix $\mathbf{M}$,
as a replacement for $\mathbf{A0}$, enabling them to prove tighter
-in fact, almost optimal- bounds on the number of observations required
in order to achieve exact reconstruction of $\mathbf{M}$ by solving
\eqref{eq:convex1}. Implicitly, they replace subspace coherence with
a closely related quantity, which we will refer to as \textit{strong
subspace coherence}, whose definition is presented below. 
\begin{defn}
\textbf{(Strong Subspace Coherence)} Consider the hypotheses of Definition
1. Then, the strong coherence of $U$ with respect to the standard
basis $\left\{ \mathbf{e}_{i}\right\} _{i\in\mathbb{N}_{N}^{+}}$
is defined as 
\begin{equation}
\mu_{s}\left(U\right)\triangleq\sup_{\left(i,j\right)\in\mathbb{N}_{N}^{+}\times\mathbb{N}_{N}^{+}}\left|\frac{N}{r}\left\langle \mathbf{e}_{i},\mathbf{P}_{U}\mathbf{e}_{j}\right\rangle -\mathds{1}_{i=j}\right|.
\end{equation}

\end{defn}
Using the definition stated above, the authors in \cite{Candes&Tao2010}
essentially replace $\mathbf{A0}$ by the following assumption, concerning
the singular vectors of $\mathbf{M}$.\\
 \ \\
 $\mathbf{A2}\qquad\max\left\{ \mu_{s}\left(U\right),\mu_{s}\left(V\right)\right\} \leq\dfrac{\mu_{0}^{s}}{\sqrt{r}},\quad\mu_{0}^{s}\in\mathbb{R}_{++}$.\\
 \ \\
 If $\mathbf{M}$ obeys the assumptions $\mathbf{A1}$ and $\mathbf{A2}$
with a parameter $\mu\ge\max\left\{ \mu_{0}^{s},\mu_{1}\right\} $,
we will say that it obeys the \textit{strong incoherence property}
or, equivalently, that it is \textit{strongly incoherent} with parameter
$\mu$.

In addition, the following $2$-in-$1$ theorem holds. 
\begin{thm}
\textbf{\textup{(Exact MC I \& II \cite{Candes&Tao2010})}} Let $\mathbf{M}\in\mathbb{C}^{N_{1}\times N_{2}}$
be a matrix of rank $r$ obeying the strong incoherence property with
parameter $\mu$ and set $N\triangleq\max\left\{ N_{1},N_{2}\right\} $.
Suppose we observe $m$ entries of $\mathbf{M}$ with locations sampled
uniformly at random. Then, there exist positive numerical constants
$C_{1}$ and $C_{2}$ such that if 
\begin{flalign}
m & \ge C_{1}\mu^{4}Nr^{2}\log^{2}N\quad\text{or}\\
m & \ge C_{2}\mu^{2}Nr\log^{6}N,
\end{flalign}
the minimizer to the program \eqref{eq:convex1} is unique and equal
to $\mathbf{M}$ with probability at least $1-N^{-3}$. 
\end{thm}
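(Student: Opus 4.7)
The plan is to prove Theorem 2 via the standard \emph{dual certificate} route for nuclear norm minimization, adapted to the strong incoherence hypothesis. First, I would reformulate \eqref{eq:convex1} as $\min\|\mathbf{X}\|_{*}$ subject to $\mathcal{P}_{\boldsymbol{\Omega}}(\mathbf{X})=\mathcal{P}_{\boldsymbol{\Omega}}(\mathbf{M})$, where $\mathcal{P}_{\boldsymbol{\Omega}}$ is the entrywise projection on the observed index set $\boldsymbol{\Omega}$. Letting $T\subseteq\mathbb{C}^{N_{1}\times N_{2}}$ be the tangent space at $\mathbf{M}$ to the manifold of rank-$r$ matrices, i.e.\ $T=\{\mathbf{U}\mathbf{A}^{\boldsymbol{\mathit{H}}}+\mathbf{B}\mathbf{V}^{\boldsymbol{\mathit{H}}}\}$, and $\mathcal{P}_{T}$ the associated orthogonal projector, the subgradient of $\|\cdot\|_{*}$ at $\mathbf{M}$ is the affine set $\mathbf{U}\mathbf{V}^{\boldsymbol{\mathit{H}}}+\{\mathbf{W}:\mathcal{P}_{T}(\mathbf{W})=\mathbf{0},\|\mathbf{W}\|\le1\}$. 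A standard duality argument then reduces exactness to two ingredients: (i) injectivity of $\mathcal{P}_{\boldsymbol{\Omega}}$ on $T$, and (ii) existence of a \emph{dual certificate} $\mathbf{Y}\in\mathrm{Range}(\mathcal{P}_{\boldsymbol{\Omega}})$ with $\mathcal{P}_{T}(\mathbf{Y})=\mathbf{U}\mathbf{V}^{\boldsymbol{\mathit{H}}}$ and $\|\mathcal{P}_{T^{\perp}}(\mathbf{Y})\|<1$.

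For ingredient (i), I would pass from uniform sampling to the slightly easier Bernoulli($\rho$) sampling, with $\rho\asymp m/(N_{1}N_{2})$, and establish the crucial concentration inequality
\begin{equation}
\bigl\|\rho^{-1}\mathcal{P}_{T}\mathcal{P}_{\boldsymbol{\Omega}}\mathcal{P}_{T}-\mathcal{P}_{T}\bigr\|\le\tfrac{1}{2}
\end{equation}
with high probability, by writing the operator as a sum of independent zero-mean self-adjoint random operators and applying a matrix Bernstein/Ahlswede--Winter-type bound. The coherence assumption $\mathbf{A0}$ controls the operator norms and variances of each summand through $\|\mathcal{P}_{T}(\mathbf{e}_{i}\mathbf{e}_{j}^{\boldsymbol{\mathit{H}}})\|_{F}^{2}\le 2\mu_{0}r/\min\{N_{1},N_{2}\}$. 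This step already produces a sample bound of the form $m\gtrsim\mu_{0}Nr\log N$ and implies injectivity on $T$.

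For ingredient (ii), I would employ Gross's \emph{golfing scheme}: partition $\boldsymbol{\Omega}$ into $\ell\asymp\log N$ independent Bernoulli batches $\boldsymbol{\Omega}_{1},\ldots,\boldsymbol{\Omega}_{\ell}$ and iteratively define $\mathbf{Y}_{k+1}=\mathbf{Y}_{k}+\rho_{k}^{-1}\mathcal{P}_{\boldsymbol{\Omega}_{k+1}}(\mathbf{E}_{k})$, where $\mathbf{E}_{k}=\mathbf{U}\mathbf{V}^{\boldsymbol{\mathit{H}}}-\mathcal{P}_{T}(\mathbf{Y}_{k})$. Step~(i), applied batchwise, makes $\|\mathbf{E}_{k}\|_{F}$ contract geometrically, so $\mathcal{P}_{T}(\mathbf{Y}_{\ell})\to\mathbf{U}\mathbf{V}^{\boldsymbol{\mathit{H}}}$. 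The remaining task is to bound $\|\mathcal{P}_{T^{\perp}}(\mathbf{Y}_{\ell})\|$, for which I would track both a spectral norm and a sup-norm ($\ell_{\infty}$) along the iteration. Assumption $\mathbf{A1}$ controls $\|\mathbf{U}\mathbf{V}^{\boldsymbol{\mathit{H}}}\|_{\infty}$, while the strong coherence bound $\mathbf{A2}$ ensures that the sup-norm stays small under each noncommutative Bernstein step. Summing the telescoping bounds across the $\log N$ batches yields the two alternative sample-complexity estimates, with the $\mu^{2}Nr\log^{6}N$ regime requiring the more delicate bookkeeping of moments.

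The main obstacle, and the reason for the stronger assumption $\mathbf{A2}$ replacing $\mathbf{A0}$, is the second bound $\|\mathcal{P}_{T^{\perp}}(\mathbf{Y}_{\ell})\|<1$: controlling the operator norm (not merely the Frobenius norm) of a sum of heavy-tailed random matrices supported on $\boldsymbol{\Omega}_{k}\cap T^{\perp}$ demands sharp tail bounds in which the strong coherence parameter $\mu_{0}^{s}/\sqrt{r}$ plays the role of the per-summand sup-norm. Getting the dependence on $r$ down from $r^{2}$ to $r$ (at the cost of $\log^{6}N$) requires resumming these tail estimates using fourth-moment rather than second-moment information, which is the technically most involved step.
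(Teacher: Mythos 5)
The first thing to note is that the paper does not prove this statement at all: Theorem 2 is imported verbatim from \cite{Candes&Tao2010} and used as a black box (its only role is to feed Theorem 5 via Lemma 1). So there is no in-paper argument to compare yours against; the only meaningful comparison is with the proof in the cited reference.

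Measured against that, your outline is a genuinely different strategy, and --- more importantly --- one that does not deliver the bounds actually stated. What you sketch (Bernoulli sampling, the concentration estimate $\bigl\|\rho^{-1}\mathcal{P}_{T}\mathcal{P}_{\boldsymbol{\Omega}}\mathcal{P}_{T}-\mathcal{P}_{T}\bigr\|\le\tfrac{1}{2}$, and an iterative golfing construction of the dual certificate) is the Gross--Recht route, which yields a bound of the form $m\gtrsim\mu_{0}Nr\log^{2}N$ under $\mathbf{A0}$-type assumptions. The theorem as stated, with its two alternative thresholds $C_{1}\mu^{4}Nr^{2}\log^{2}N$ and $C_{2}\mu^{2}Nr\log^{6}N$ expressed in the strong incoherence parameter $\mu$, comes out of a different mechanism: Cand\`es and Tao take the closed-form certificate $\mathbf{Y}=\rho^{-1}\mathcal{P}_{\boldsymbol{\Omega}}\mathcal{P}_{T}\bigl(\rho^{-1}\mathcal{P}_{T}\mathcal{P}_{\boldsymbol{\Omega}}\mathcal{P}_{T}\bigr)^{-1}\mathbf{U}\mathbf{V}^{\boldsymbol{\mathit{H}}}$, expand the inverse in a Neumann series, and bound the spectral norm of each term by the moment method --- estimating traces of high powers via a long combinatorial enumeration over paths, with decoupling to handle the dependence on $\boldsymbol{\Omega}$. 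It is precisely in those trace-moment estimates that $\mathbf{A2}$ (the $\mu_{0}^{s}/\sqrt{r}$ control on the entries of $\mathbf{P}_{U}$ and $\mathbf{P}_{V}$) is consumed, and it is the depth of the Neumann series that produces the $r^{2}\log^{2}N$ versus $r\log^{6}N$ trade-off. Your closing appeal to ``fourth-moment rather than second-moment information'' gestures at this, but inside a golfing iteration there is no natural place for that bookkeeping, and nothing in your plan explains where the exponent $4$ on $\mu$, the factor $r^{2}$, or the power $\log^{6}N$ would arise. As written, your plan would plausibly prove \emph{a} matrix completion theorem, but not \emph{this} one; recovering the stated thresholds requires following the moment-method architecture of the original reference (or else proving a different, golfing-style bound and noting that it suffices for the paper's downstream use in Theorem 5).
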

Further, it is easy to show that, in fact, incoherence implies strong
incoherence (stated relatively informaly in \cite{Candes&Tao2010}),
as the following lemma asserts. 
\begin{lem}
If a matrix $\mathbf{M}\in\mathbb{C}^{N_{1}\times N_{2}}$ of rank
$r$ is incoherent with parameters $\mu_{0}$ and $\mu_{1}$, it is
also strongly incoherent with parameter $\mu\le\mu_{0}\sqrt{r}$. 
\end{lem}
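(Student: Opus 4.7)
The plan is to bound, separately, the two quantities $\mu_0^s$ and $\mu_1$ that jointly define the strong-incoherence parameter $\mu \ge \max\{\mu_0^s,\mu_1\}$, and show that each is at most $\mu_0\sqrt{r}$. Concretely, I will (a) show that $\mathbf{A0}$ with parameter $\mu_0$ implies $\max\{\mu_s(U),\mu_s(V)\}\le\mu_0$, so that $\mathbf{A2}$ holds with $\mu_0^s=\mu_0\sqrt{r}$, and (b) show that $\mathbf{A0}$ with parameter $\mu_0$ already implies $\mathbf{A1}$ with parameter $\mu_0\sqrt{r}$, so one may assume without loss of generality that $\mu_1\le\mu_0\sqrt{r}$. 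The proof is then immediate.

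For (a), I would split the supremum in the definition of $\mu_s(U)$ into its diagonal and off-diagonal parts. On the diagonal, $\langle \mathbf{e}_i,\mathbf{P}_U\mathbf{e}_i\rangle=\|\mathbf{P}_U\mathbf{e}_i\|_2^2\in[0,(r/N)\mu_0]$ by $\mathbf{A0}$, so $\tfrac{N}{r}\langle\mathbf{e}_i,\mathbf{P}_U\mathbf{e}_i\rangle-1\in[-1,\mu_0-1]$; since $\mu_0\ge 1$ (because $\mu(U)\ge 1$ always), the absolute value is at most $\mu_0$. Off the diagonal, using self-adjointness and idempotency of $\mathbf{P}_U$, write
\begin{equation}
\bigl|\langle\mathbf{e}_i,\mathbf{P}_U\mathbf{e}_j\rangle\bigr|
=\bigl|\langle\mathbf{P}_U\mathbf{e}_i,\mathbf{P}_U\mathbf{e}_j\rangle\bigr|
\le\|\mathbf{P}_U\mathbf{e}_i\|_2\,\|\mathbf{P}_U\mathbf{e}_j\|_2\le\frac{r}{N}\mu_0
\end{equation}
by Cauchy--Schwarz and $\mathbf{A0}$, so the rescaled quantity is again bounded by $\mu_0$. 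The same argument applies to $V$, yielding $\max\{\mu_s(U),\mu_s(V)\}\le\mu_0=(\mu_0\sqrt{r})/\sqrt{r}$, i.e.\ $\mathbf{A2}$ with $\mu_0^s\le\mu_0\sqrt{r}$.

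For (b), observe that the $(i,j)$ entry of $\sum_{k\in\mathbb{N}_r^+}\mathbf{u}_k\mathbf{v}_k^{\boldsymbol{\mathit{H}}}=\mathbf{U}\mathbf{V}^{\boldsymbol{\mathit{H}}}$ is $\sum_k\mathbf{U}(i,k)\overline{\mathbf{V}(j,k)}$, which by Cauchy--Schwarz and the identities $\|\mathbf{U}(i,:)\|_2^2=\|\mathbf{P}_U\mathbf{e}_i\|_2^2$, $\|\mathbf{V}(j,:)\|_2^2=\|\mathbf{P}_V\mathbf{e}_j\|_2^2$ is bounded in magnitude by $\sqrt{(r/N_1)\mu_0}\sqrt{(r/N_2)\mu_0}=\mu_0\sqrt{r}\sqrt{r/(N_1N_2)}$. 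Hence $\mathbf{A1}$ holds with constant at most $\mu_0\sqrt{r}$, so we may always replace the given $\mu_1$ by $\min\{\mu_1,\mu_0\sqrt{r}\}$ without affecting incoherence. Taking $\mu:=\max\{\mu_0^s,\mu_1\}\le\mu_0\sqrt{r}$ completes the proof.

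There is no real obstacle here; this is essentially bookkeeping. The only point worth attention is the diagonal case of $\mu_s(U)$, where one must use $\mu_0\ge 1$ to dominate the $-1$ side of the interval $[-1,\mu_0-1]$ by $\mu_0$; and the subtle (but trivial) observation that $\mathbf{A0}$ automatically tightens any bad $\mu_1$ down to $\mu_0\sqrt{r}$, ensuring the resulting $\mu$ is controlled by $\mu_0\sqrt{r}$ alone.
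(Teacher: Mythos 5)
Your proof is correct and follows essentially the same route as the paper's: the diagonal/off-diagonal split of the supremum defining $\mu_{s}\left(U\right)$, with Cauchy--Schwarz and $\mathbf{A0}$ handling the off-diagonal terms, then $\mu\triangleq\max\left\{ \mu_{0}^{s},\mu_{1}\right\} \le\mu_{0}\sqrt{r}$. The only difference is that you explicitly prove the auxiliary fact $\mu_{1}\le\mu_{0}\sqrt{r}$ (and note the need for $\mu_{0}\ge1$ on the diagonal), whereas the paper simply cites it.
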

\begin{proof}[Proof of Lemma 1] See Appendix A.\end{proof}

\subsection{Noisy Matrix Completion: Stability}

In the previous subsection, we have focused on cases in which the
matrix observations are perfectly noiseless. Of course, in any realistic
setting, these observations will be corrupted by noise, yielding the
observation model (for the same generic matrix $\mathbf{M}$) 
\begin{equation}
\mathcal{P}\left(\mathbf{Y}\right)\triangleq\mathcal{P}\left(\mathbf{M}\right)+\mathcal{P}\left(\mathbf{Z}\right),
\end{equation}
where $\mathbf{Z}\in\mathbb{C}^{N_{1}\times N_{2}}$ constitutes a
noise matrix with $\left\Vert \mathcal{P}\left(\mathbf{Z}\right)\right\Vert _{F}\le\delta$
for some known constant $\delta>0$. If $\mathbf{Z}$ is random, then
$\delta$ is such that $\left\Vert \mathcal{P}\left(\mathbf{Z}\right)\right\Vert _{F}\le\delta$
with high probability. Then, instead of \eqref{eq:convex1}, the authors
in \cite{CandesPlan2009} propose solving the program 
\begin{equation}
\begin{array}{ll}
\mathrm{minimize} & \left\Vert \mathbf{X}\right\Vert _{*}\\
\mathrm{subject\, to} & \left\Vert \mathcal{P}\left(\mathbf{X}-\mathbf{Y}\right)\right\Vert _{F}\le\delta
\end{array}.\label{eq:convex1-1}
\end{equation}
Indeed, under the same hypotheses of Theorem 2, it can be shown that
the error norm $\left\Vert \mathbf{M}-\widehat{\mathbf{M}}\right\Vert _{F}$
is bounded from above as \cite{CandesPlan2009} 
\begin{equation}
\left\Vert \mathbf{M}-\widehat{\mathbf{M}}\right\Vert _{F}\le4\sqrt{\dfrac{\left(2N_{1}N_{2}+m\right)\min\left(N_{1},N_{2}\right)}{m}}\delta+2\delta
\end{equation}
with very high probability, where $\widehat{\mathbf{M}}$ denotes
the solution to \eqref{eq:convex1-1}. In a nutshell, as the authors
in \cite{CandesPlan2009} aptly explain, ``when perfect noiseless
recovery occurs, then matrix completion is \textit{stable} vis a vis
perturbations''. Consequently, if $\mathbf{M}$ exhibits favorable
coherence properties, both noiseless and noisy matrix completion will
be more realistic. Therefore, in order to guarantee satisfactory performance
for matrix completion, it suffices to study the conditions under which
the coherence of $\mathbf{M}$ will be as low as possible.

\subsection{Matrix Completion in Colocated MIMO Radar}

We consider the respective problem formulation proposed in \cite{SunPetropuluAEROSPACE}
(see Subsection A of Section II in \cite{SunPetropuluAEROSPACE}).
Extending the problem to the case of arbitrary $2$-dimensional arrays
in an obvious way \cite{KrimViberg1996}, the matrix to be completed
at the fusion center of the receiver, $\mathcal{P}\left(\mathbf{Y}\right)$,
obeys the special observation model 
\begin{equation}
\mathbf{Y}\triangleq\boldsymbol{\Delta}+\mathbf{Z}\in\mathbb{C}^{M_{r}\times M_{t}},
\end{equation}
where $M_{r}$ and $M_{t}$ denote the numbers of reception and transmission
antennas, respectively, $\mathbf{Z}$ is an interference/observation
noise matrix as defined above and 
\begin{equation}
\boldsymbol{\Delta}\triangleq\mathbf{X}_{r}\mathbf{D}\mathbf{X}_{t}^{\boldsymbol{T}},
\end{equation}
where $\mathbf{X}_{r}\in\mathbb{C}^{M_{r}\times K}$ (respectively
for $\mathbf{X}_{t}\in\mathbb{C}^{M_{t}\times K}$) constitutes an
\textit{alternant} matrix defined as 
\begin{flalign}
\mathbf{X}_{r} & \triangleq\begin{bmatrix}\gamma_{0}^{0} & \gamma_{1}^{0} & \cdots & \gamma_{K-1}^{0}\\
\gamma_{0}^{1} & \gamma_{1}^{1} & \cdots & \gamma_{K-1}^{1}\\
\vdots & \vdots & \ddots & \vdots\\
\gamma_{0}^{M_{t}-1} & \gamma_{1}^{M_{t}-1} & \cdots & \gamma_{K-1}^{M_{t}-1}
\end{bmatrix}\in\mathbb{C}^{M_{r}\times K},
\end{flalign}
with 
\begin{flalign}
\gamma_{k}^{l} & \triangleq e^{j2\pi\mathbf{r}_{r}^{\boldsymbol{T}}\left(l\right)\mathcal{T}\left(\theta_{k}\right)},\quad\left(l,k\right)\in\mathbb{N}_{M_{r}-1}\times\mathbb{N}_{K-1}\\
\mathbf{r}_{r}\left(l\right) & \triangleq\dfrac{1}{\lambda}\left[x_{l}^{r}\, y_{l}^{r}\right]^{\boldsymbol{T}}\in\mathbb{R}^{2\times1},\quad l\in\mathbb{N}_{M_{r}-1}\\
\mathcal{T}\left(\theta_{k}\right) & \triangleq\begin{bmatrix}\cos\left(\theta_{k}\right)\\
\sin\left(\theta_{k}\right)
\end{bmatrix}\in\mathbb{R}^{2\times1},\quad k\in\mathbb{N}_{K-1},
\end{flalign}
with the sets $\left\{ \left[x_{l}^{r}\, y_{l}^{r}\right]^{\boldsymbol{T}}\right\} _{l\in\mathbb{N}_{M_{r}-1}}$
and $\left\{ \theta_{k}\right\} _{k\in\mathbb{N}_{K-1}}$ containing
the $2$-dimensional antenna coordinates of the reception array and
the target angles, respectively and $\lambda$ denoting the carrier
wavelength utilized for the communication, and where $\mathbf{D}\in\mathbb{C}^{K\times K}$
constitutes a non - zero - diagonal matrix defined as 
\begin{flalign}
\mathbf{D} & \triangleq\mathrm{diag}\left(\left[\zeta_{1}\rho_{1}\,\zeta_{2}\rho_{2}\,\ldots\,\zeta_{K}\rho_{K}\right]\right),\quad\text{with}\\
\rho_{i} & \triangleq\exp\left(j\dfrac{4\pi}{\lambda}\vartheta_{i}\left(q-1\right)T_{PR}\right),\quad i\in\mathbb{N}_{K-1},
\end{flalign}
where the sets $\left\{ \zeta_{k}\right\} _{k\in\mathbb{N}_{K-1}}$
and $\left\{ \vartheta_{k}\right\} _{k\in\mathbb{N}_{K-1}}$ contain
the target reflection coefficients and target speeds, respectively
and $q$ and $T_{PR}$ denote the pulse index and the pulse repetition
interval, respectively. For the simplest ULA case (as treated in \cite{SunPetropuluAEROSPACE}),
\begin{equation}
\left[x_{l}^{r\left(t\right)}\, y_{l}^{r\left(t\right)}\right]^{\boldsymbol{T}}\equiv\left[0\, ld_{r\left(t\right)}\right]^{\boldsymbol{T}},\quad l\in\mathbb{N}_{M_{r\left(t\right)}-1},
\end{equation}
where $d_{r\left(t\right)}$ denotes the respective array antenna
spacing, and $\mathbf{X}_{r}$ and $\mathbf{X}_{t}$ degenerate to
Vandermonde matrices.

In the next sections, we focus almost exclusively on bounding the
coherence of $\boldsymbol{\Delta}$, first for the ULA case and then
for the more general arbitrary $2$-dimensional array case. Also,
when possible, we derive sufficient conditions under which the coherence
the aforementioned matrix is small.

\section{Coherence and Recoverability of $\boldsymbol{\Delta}$ for\protect
\protect \\
 ULA Transmitter - Receiver Pairs}

In this section, we present our coherence and recoverability results
for the case in which ULAs are utilized for transmission and reception.
In short, we prove: 
\begin{itemize}
\item Asymptotic and approximate optimality of the coherence of $\boldsymbol{\Delta}$
with respect to the number of transmission/reception antennas and 
\item Near optimal recoverability of $\boldsymbol{\Delta}$ via matrix completion. 
\end{itemize}

\subsection{Coherence of $\boldsymbol{\Delta}$}
\begin{thm}
\textbf{\textup{(Coherence for ULAs)}} Consider a Uniform Linear Array
(ULA) transmitter - receiver pair and assume that the set of target
angles $\left\{ \theta_{k}\right\} _{k\in\mathbb{N}_{K-1}}$ consists
of almost surely distinct members. Then, for any fixed $M_{t}$ and
$M_{r}$, as long as 
\begin{equation}
K\le\max_{i\in\left\{ t,r\right\} }\left\{ \dfrac{M_{i}}{\sqrt{\beta_{\xi_{i}}\left(M_{i}\right)}}\right\} ,
\end{equation}
the associated matrix $\boldsymbol{\Delta}$ obeys the assumptions
$\mathbf{A0}$ and $\mathbf{A1}$ with 
\begin{flalign}
\mu_{0} & \triangleq\max_{i\in\left\{ t,r\right\} }\left\{ \dfrac{M_{i}}{M_{i}-\left(K-1\right)\sqrt{\beta_{\xi_{i}}\left(M_{i}\right)}}\right\} \quad\text{and}\label{eq:mu_o}\\
\mu_{1} & \triangleq\max_{i\in\left\{ t,r\right\} }\left\{ \dfrac{M_{i}\sqrt{K}}{M_{i}-\left(K-1\right)\sqrt{\beta_{\xi_{i}}\left(M_{i}\right)}}\right\} \label{eq:mu_1}
\end{flalign}
with probability $1$. In the above, 
\begin{flalign}
\beta_{\xi_{k}}\left(M_{k}\right) & \triangleq\sup_{x\in\left[\xi_{k},\frac{1}{2}\right]}\dfrac{\sin^{2}\left(\pi M_{k}x\right)}{\sin^{2}\left(\pi x\right)},
\end{flalign}
\begin{equation}
\xi_{k}\triangleq\min_{\substack{\left(i,j\right)\in\mathbb{N}_{K-1}\times\mathbb{N}_{K-1}\\
i\neq j
}
}g\left(\dfrac{d_{k}}{\lambda}\left|\sin\left(\theta_{i}\right)-\sin\left(\theta_{j}\right)\right|\right)\label{eq:xixixixi}
\end{equation}
for $k\in\left\{ t,r\right\} $ and 
\begin{equation}
g\left(x\right)\triangleq\begin{cases}
\left\lceil x\right\rceil -x, & \left\lceil x\right\rceil -x\leq\dfrac{1}{2}\\
x-\left\lfloor x\right\rfloor , & \text{otherwise}
\end{cases}.
\end{equation}
Further, if $\xi\triangleq\min\left\{ \xi_{r},\xi_{t}\right\} \neq0$,
then, for any fixed $K$, as long as 
\begin{equation}
\min_{i\in\left\{ t,r\right\} }M_{i}\ge K\sqrt{\beta_{\xi}}=\mathcal{O}\left(K\right),
\end{equation}
where 
\begin{flalign}
\beta_{\xi} & \triangleq\sup_{\substack{x\in\left[\xi,\frac{1}{2}\right]\\
M_{t}\in\mathbb{R}_{++}
}
}\dfrac{\sin^{2}\left(\pi M_{t}x\right)}{\sin^{2}\left(\pi x\right)}\equiv\sup_{\substack{x\in\left[\xi,\frac{1}{2}\right]\\
M_{r}\in\mathbb{R}_{++}
}
}\dfrac{\sin^{2}\left(\pi M_{r}x\right)}{\sin^{2}\left(\pi x\right)},
\end{flalign}
\eqref{eq:mu_o} and \eqref{eq:mu_1} hold replacing both $\beta_{\xi_{t}}\left(M_{t}\right)$
and $\beta_{\xi_{r}}\left(M_{r}\right)$ by the constant $\beta_{\xi}$
(that is, independent of both $M_{t}$ and $M_{r}$). Additionally,
in the limit we respect to $M_{t}$ and $M_{r}$, we have 
\begin{equation}
\mu\left(V\right)\equiv\mu\left(U\right)\equiv1,
\end{equation}
that is, the coherence of $\boldsymbol{\Delta}$ is asymptotically
optimal. Finally, if $\xi$ is safely bounded away from zero, then,
for sufficiently large $M_{t}$ and $M_{r}$, it is true that 
\begin{equation}
\mu\left(V\right)\approx\mu\left(U\right)\approx1.
\end{equation}
that is, the smallest possible coherence for $\boldsymbol{\Delta}$
can be approximately attained even for finite values of $M_{t}$ and
$M_{r}$. 
\end{thm}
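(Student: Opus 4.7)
The plan is to exploit the factored form $\boldsymbol{\Delta}=\mathbf{X}_{r}\mathbf{D}\mathbf{X}_{t}^{\boldsymbol{T}}$: the core diagonal factor $\mathbf{D}$ is nonsingular, while for the ULA geometry $\mathbf{X}_{r}$ and $\mathbf{X}_{t}$ specialize to Vandermonde matrices with generators $e^{j2\pi(d_{r}/\lambda)\sin\theta_{k}}$ and $e^{j2\pi(d_{t}/\lambda)\sin\theta_{k}}$, respectively. The almost-sure distinctness of $\{\theta_{k}\}$ makes these generators distinct a.s., so both Vandermonde factors have full column rank $K$ almost surely. Hence the column subspace $U$ of $\boldsymbol{\Delta}$ equals $\mathrm{col}(\mathbf{X}_{r})$ and the row subspace $V$ equals $\mathrm{col}(\overline{\mathbf{X}_{t}})$, and since the coherence of a subspace is invariant under entrywise conjugation, the problem decouples into two structurally identical bounds, one for each array.

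For $\mathbf{A0}$, I would write $\mathbf{P}_{U}=\mathbf{X}_{r}(\mathbf{X}_{r}^{H}\mathbf{X}_{r})^{-1}\mathbf{X}_{r}^{H}$, so that $\|\mathbf{P}_{U}\mathbf{e}_{i}\|_{2}^{2}=\mathbf{x}_{i}^{H}(\mathbf{X}_{r}^{H}\mathbf{X}_{r})^{-1}\mathbf{x}_{i}$, where $\mathbf{x}_{i}^{H}$ is the $i$-th row of $\mathbf{X}_{r}$; unimodularity of every entry of $\mathbf{X}_{r}$ gives $\|\mathbf{x}_{i}\|_{2}^{2}=K$. The Gram matrix $\mathbf{X}_{r}^{H}\mathbf{X}_{r}$ has $M_{r}$ on the diagonal and off-diagonal $(k,\ell)$-entry equal to the geometric sum $\sum_{n=0}^{M_{r}-1}e^{j2\pi n\alpha_{k\ell}}$ with $\alpha_{k\ell}=(d_{r}/\lambda)(\sin\theta_{\ell}-\sin\theta_{k})$; a Dirichlet-kernel identity combined with the unit-period symmetry about $1/2$ encoded by $g(\cdot)$ bounds its squared modulus by $\beta_{\xi_{r}}(M_{r})$ for every pair $k\neq\ell$. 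Gershgorin's disc theorem then yields $\lambda_{\min}(\mathbf{X}_{r}^{H}\mathbf{X}_{r})\ge M_{r}-(K-1)\sqrt{\beta_{\xi_{r}}(M_{r})}$, which is positive exactly under the admissibility condition stated; consequently $\|\mathbf{P}_{U}\mathbf{e}_{i}\|_{2}^{2}\le K/[M_{r}-(K-1)\sqrt{\beta_{\xi_{r}}(M_{r})}]$, and rescaling by $M_{r}/K$ together with the symmetric bound for $V$ delivers the claimed $\mu_{0}$.

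For $\mathbf{A1}$, the compact SVD produces $\sum_{i}\mathbf{u}_{i}\mathbf{v}_{i}^{H}=\mathbf{U}\mathbf{V}^{H}$, so each $(a,b)$-entry is the inner product of the $a$-th row of $\mathbf{U}$ and the $b$-th row of $\mathbf{V}$. Cauchy--Schwarz with the identity $\|\mathbf{U}(a,:)\|_{2}^{2}=\|\mathbf{P}_{U}\mathbf{e}_{a}\|_{2}^{2}$ and the bounds from the previous step yields $|[\mathbf{U}\mathbf{V}^{H}]_{ab}|\le K/\sqrt{[M_{r}-(K-1)\sqrt{\beta_{\xi_{r}}(M_{r})}]\cdot[M_{t}-(K-1)\sqrt{\beta_{\xi_{t}}(M_{t})}]}$. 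Since the geometric mean of two numbers $\ge 1$ is dominated by their maximum, matching this against $\mu_{1}\sqrt{K/(M_{r}M_{t})}$ reproduces the claimed $\mu_{1}$.

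The asymptotic and approximate-optimality statements then fall out: whenever $\xi>0$, the constant $\beta_{\xi}=1/\sin^{2}(\pi\xi)$ is an $M$-independent upper envelope for $\beta_{\xi_{i}}(M_{i})$, so the penalty $(K-1)\sqrt{\beta_{\xi}}/M_{i}\to 0$ as $M_{i}\to\infty$ with $K$ fixed, forcing $\mu(U),\mu(V)\to 1$; the ``$\approx 1$'' claim is just the finite-$M$ version of the same inequality. The hard part I expect is the measure-theoretic underpinning of the a.s.\ claim: one must argue that the a.s.\ distinctness of $\{\theta_{k}\}$ propagates to $\xi_{r},\xi_{t}>0$ a.s., i.e.\ that $(d_{i}/\lambda)(\sin\theta_{\ell}-\sin\theta_{k})\notin\mathbb{Z}$ almost surely for every pair $k\neq\ell$ and $i\in\{t,r\}$, so that no Dirichlet-kernel denominator degenerates. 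Once this null set of arithmetically-singular configurations is excluded, every subsequent estimate is a deterministic consequence of the Gram-matrix analysis above.
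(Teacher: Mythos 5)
Your proposal is correct and reaches the theorem's bounds, but it swaps out the paper's key spectral lemma. The paper lower-bounds $\lambda_{\min}\left(\mathbf{X}_{t\left(r\right)}^{\boldsymbol{\mathit{H}}}\mathbf{X}_{t\left(r\right)}\right)$ via the Wolkowicz--Styan trace inequality (its Theorem 4), i.e.\ $\lambda_{\min}\geq\tau-s\sqrt{K-1}$ with $\tau=M_{t}$ and $s^{2}\leq\left(K-1\right)\beta_{\xi_{t}}\left(M_{t}\right)$ obtained by bounding $\mathrm{tr}\left(\mathbf{M}^{2}\right)$ entrywise through the Dirichlet kernel; you instead invoke Gershgorin's disc theorem with the same per-entry bound $\left|\delta_{k,\ell}\right|\leq\sqrt{\smash[b]{\beta_{\xi}}}$ on the off-diagonal Gram entries. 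Because both arguments ultimately use the uniform worst-case bound on $\left|\delta_{k,\ell}\right|$, they yield the identical estimate $M_{i}-\left(K-1\right)\sqrt{\beta_{\xi_{i}}\left(M_{i}\right)}$; your route is more elementary, while the paper's trace-based route would be strictly tighter if one kept the individual $\left|\delta_{k,\ell}\right|^{2}$ rather than their common upper bound. Your handling of the projector via $\mathbf{P}_{U}=\mathbf{X}_{r}\left(\mathbf{X}_{r}^{\boldsymbol{\mathit{H}}}\mathbf{X}_{r}\right)^{-1}\mathbf{X}_{r}^{\boldsymbol{\mathit{H}}}$ is equivalent to the paper's thin-QR argument ($\mu\left(U\right)\leq M_{r}/\lambda_{\min}\left(\mathbf{X}_{r}^{\boldsymbol{\mathit{H}}}\mathbf{X}_{r}\right)$), your Cauchy--Schwarz derivation of $\mu_{1}$ matches (and slightly refines, via the geometric mean) the paper's $\mu_{1}=\mu_{0}\sqrt{K}$, and your closed form $\beta_{\xi}=1/\sin^{2}\left(\pi\xi\right)$ is a nice explicit evaluation of a quantity the paper only plots numerically. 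One small calibration: the positivity of $\xi_{r},\xi_{t}$ is not something you need to extract from the a.s.\ distinctness of the angles --- the first part of the theorem holds for any fixed $M_{t},M_{r}$ even when $\xi_{i}=0$ (the admissibility condition then just becomes vacuous since $\beta_{0}\left(M_{i}\right)=M_{i}^{2}$), and the asymptotic part explicitly \emph{assumes} $\xi\neq0$; what the a.s.\ hypothesis must deliver is only the full column rank of the Vandermonde factors, which the paper likewise asserts without further measure-theoretic elaboration.
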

Before we proceed with the proof of the theorem, let us state the
following standard result, which will help us develop our arguments
easily and concretely. 
\begin{thm}
\textup{\cite{Wolkowicz1980} }Let $\mathbf{M}\in\mathbb{C}^{N\times N}$
be a matrix with real eigenvalues. Define 
\begin{equation}
\tau\triangleq\dfrac{\mathrm{tr}\left(\mathbf{M}\right)}{N}\quad\text{and}\quad s^{2}\triangleq\dfrac{\mathrm{tr}\left(\mathbf{M}^{2}\right)}{N}-\tau^{2}.
\end{equation}
Then, it is true that 
\begin{flalign}
\tau-s\sqrt{N-1} & \leq\lambda_{min}\left(\mathbf{M}\right)\leq\tau-\dfrac{s}{\sqrt{N-1}}\quad\text{and}\label{eq:bound1}\\
\tau+\dfrac{s}{\sqrt{N-1}} & \leq\lambda_{max}\left(\mathbf{M}\right)\leq\tau+s\sqrt{N-1}.\label{eq:bound2}
\end{flalign}
Further, equality holds on the left (right) of \eqref{eq:bound1}
if and only if equality holds on the left (right) of \eqref{eq:bound2}
if and only if the $N-1$ largest (smallest) eigenvalues are equal. 
\end{thm}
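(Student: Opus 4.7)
The plan is to reduce the four inequalities to two elementary estimates on the centered eigenvalue sequence. Because $\mathbf{M}$ has real spectrum, I may list its eigenvalues as $\lambda_1\le\cdots\le\lambda_N$ and introduce $y_i\triangleq\lambda_i-\tau$. The definitions of $\tau$ and $s^2$ translate at once into the two moment identities $\sum_{i\in\mathbb{N}_{N}^{+}} y_i=0$ and $\sum_{i\in\mathbb{N}_{N}^{+}} y_i^2=Ns^2$, so the claim reduces to proving $s/\sqrt{N-1}\le y_N\le s\sqrt{N-1}$ and $-s\sqrt{N-1}\le y_1\le -s/\sqrt{N-1}$, together with the announced equality characterizations.

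For the \emph{outer} bounds $y_N\le s\sqrt{N-1}$ and $y_1\ge -s\sqrt{N-1}$, I would use the first moment identity to write $y_N=-\sum_{i\ne N}y_i$ and apply plain Cauchy--Schwarz, giving
\begin{equation*}
    y_N^{2}\;=\;\Bigl(\textstyle\sum_{i\ne N}y_i\Bigr)^{2}\;\le\;(N-1)\sum_{i\ne N}y_i^{2}\;=\;(N-1)\bigl(Ns^{2}-y_N^{2}\bigr),
\end{equation*}
which rearranges to $y_N^{2}\le(N-1)s^{2}$. The same argument starting from $y_1=-\sum_{i\ne 1}y_i$ delivers the companion inequality for $\lambda_{\min}$. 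Equality in Cauchy--Schwarz here forces $y_1=\cdots=y_{N-1}$ (respectively $y_2=\cdots=y_N$), which is exactly the stated condition that the $N-1$ smallest (respectively largest) eigenvalues be equal.

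For the \emph{inner} bounds $y_N\ge s/\sqrt{N-1}$ and $y_1\le -s/\sqrt{N-1}$, the ordering information is crucial, so I would set $a_i\triangleq y_N-y_i\ge 0$ and use both moment identities to compute, in one line each,
\begin{equation*}
    \textstyle\sum_{i\ne N}a_i\;=\;Ny_N\qquad\text{and}\qquad\textstyle\sum_{i\ne N}a_i^{2}\;=\;N\bigl(y_N^{2}+s^{2}\bigr).
\end{equation*}
Because the $a_i$ are nonnegative, the elementary inequality $\bigl(\sum a_i\bigr)^{2}\ge\sum a_i^{2}$ (a direct consequence of the expansion $(\sum a_i)^2=\sum a_i^2+2\sum_{i<j}a_i a_j$) yields $N^{2}y_N^{2}\ge N(y_N^{2}+s^{2})$, hence $y_N^{2}\ge s^{2}/(N-1)$. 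Since $y_N\ge 0$ (the maximum of a centered sequence), this gives the desired bound; the analogous calculation with $b_i\triangleq y_i-y_1\ge 0$ handles $\lambda_{\min}$. Equality in $\bigl(\sum a_i\bigr)^{2}\ge\sum a_i^{2}$ with nonnegative summands forces all but one of the $a_i$ to vanish, which is precisely the condition that the $N-1$ largest eigenvalues coincide.

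The main obstacle I anticipate is not analytic but organizational: the theorem asserts a triple biconditional linking equality on the left/right of \eqref{eq:bound1} with equality on the left/right of \eqref{eq:bound2} and with coincidence of the $N-1$ largest or smallest eigenvalues, and this must be traced consistently through the two \emph{different} inequalities used for the outer and inner bounds. Once each of the four inequalities is paired with its equality case as above, matching "$N-1$ largest equal" to the correct endpoints (namely the upper bound on $\lambda_{\max}$ and the upper bound on $\lambda_{\min}$) and "$N-1$ smallest equal" to the other two is a routine check. No genuine technical difficulty is expected beyond this bookkeeping, since the real-spectrum hypothesis sidesteps all ordering and measurability issues from the start.
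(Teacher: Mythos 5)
The paper does not actually prove this statement: Theorem 4 is imported verbatim from the cited reference [Wolkowicz--Styan 1980] and used as a black box, so there is no in-paper proof to compare against. Your argument is correct and self-contained, and it is essentially the standard Samuelson-type proof of that result: centering the spectrum, using Cauchy--Schwarz on $y_N=-\sum_{i\neq N}y_i$ for the outer bounds, and the inequality $\bigl(\sum_i a_i\bigr)^2\ge\sum_i a_i^2$ for nonnegative $a_i$ for the inner bounds, with the equality cases falling out of each inequality's equality condition. Two small points are worth making explicit if you write this up: (i) the moment identities $\sum_i y_i=0$ and $\sum_i y_i^2=Ns^2$ rest on the fact that $\mathrm{tr}(\mathbf{M}^k)=\sum_i\lambda_i^k$ for a general \emph{complex} matrix (via Schur triangularization, since $\mathbf{M}$ is not assumed normal), which also shows $s^2\ge 0$ so that $s$ is well defined; and (ii) the statement implicitly assumes $N\ge 2$. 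Your equality bookkeeping is also consistent with the theorem: Cauchy--Schwarz equality forces the $N-1$ smallest (resp.\ largest) $y_i$ to coincide, which is exactly the condition attached to the right (resp.\ left) endpoints, and the ordering $a_1\ge\cdots\ge a_{N-1}\ge 0$ correctly identifies which single $a_i$ may survive in the inner-bound equality case.
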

\begin{proof}[Proof of Theorem 3]In order to make it more tractable,
we divide the proof into the following two subsections. In the first
subsection, we present a suitable characterization of the SVD of $\boldsymbol{\Delta}$
which will come in handy in identifying the essential actions needed
to be taken in order to bound its coherence. However, the methodology
presented is very general and can be applied to any given rank-$L$
matrix $\mathbf{M}$, as long as it can be explicitly written as the
product $\mathbf{M}=\mathbf{M}_{1}\mathbf{D}\mathbf{M}_{2}\in\mathbb{C}^{M_{1}\times M_{2}}$,
with $\mathbf{D}\in\mathbb{C}^{L\times L}$ being an arbitrary full
rank matrix. In the second subsection, we use the results from the
first, as well as the special structure of $\boldsymbol{\Delta}$
in order to derive quantitatively useful results regarding its coherence.

\subsubsection{Characterization of the SVD of $\boldsymbol{\Delta}$}

We would like to study the coherence of the almost surely rank-$K$
matrix 
\begin{equation}
\boldsymbol{\Delta}=\mathbf{X}_{r}\mathbf{D}\mathbf{X}_{t}^{\boldsymbol{T}}\in\mathbb{C}^{M_{r}\times M_{t}},
\end{equation}
where the Vandermonde matrix $\mathbf{X}_{r}\in\mathbb{C}^{M_{r}\times K}$
(respectively for $\mathbf{X}_{t}\in\mathbb{C}^{M_{t}\times K}$)
is defined by the generating vector 
\begin{equation}
\Gamma\triangleq\left[\gamma_{0}\,\gamma_{1}\,\ldots\,\gamma_{K-1}\right]^{\boldsymbol{T}}\in\mathbb{C}^{K\times1},
\end{equation}
with 
\begin{equation}
\gamma_{k}\triangleq e^{j2\pi\alpha_{k}^{r}}\quad\text{and}\quad\alpha_{k}^{r}\triangleq\frac{d_{r}\sin\left(\theta_{k}\right)}{\lambda},\quad k\in\mathbb{N}_{K-1},
\end{equation}
and $\mathbf{D}\in\mathbb{C}^{K\times K}$ is an arbitrary non - zero
- diagonal matrix. Since $\boldsymbol{\Delta}$ is assumed to be of
rank-$K$ almost surely, we implicitly consider the case where the
set of angles $\left\{ \theta_{k}\right\} _{k\in\mathbb{N}_{K-1}}$
consists of almost surely distinct members.

In general, the compact SVD of $\boldsymbol{\Delta}$ can be expressed
as 
\begin{equation}
\boldsymbol{\Delta}=\mathbf{U}\boldsymbol{\Sigma}\mathbf{V}^{\boldsymbol{H}},
\end{equation}
where $\mathbf{U}\in\mathbb{C}^{M_{r}\times K}$, $\mathbf{V}\in\mathbb{C}^{M_{t}\times K}$
such that $\mathbf{U}^{\boldsymbol{H}}\mathbf{U}=\mathbf{I}_{K}$,
$\mathbf{V}^{\boldsymbol{H}}\mathbf{V}=\mathbf{I}_{K}$ and $\boldsymbol{\Sigma}\in\mathbb{R}^{K\times K}$
constitutes a diagonal matrix containing the (non zero) singular values
of $\boldsymbol{\Delta}$.

Respectively for $\mathbf{X}_{t}$, consider the thin QR decomposition
of $\mathbf{X}_{r}$ given by $\mathbf{X}_{r}=\mathbf{V}_{r}\mathbf{A}_{r}$,
where $\mathbf{V}_{r}\in\mathbb{C}^{M_{r}\times K}$ is such that
$\mathbf{V}_{r}^{\boldsymbol{\mathit{H}}}\mathbf{V}_{r}\equiv\mathbf{I}_{K}$
and $\mathbf{A}_{r}\in\mathbb{C}^{K\times K}$ constitutes an upper
triangular matrix. Then, $\boldsymbol{\Delta}=\mathbf{V}_{r}\mathbf{A}_{r}\mathbf{D}\mathbf{A}_{t}^{\boldsymbol{\mathit{T}}}\mathbf{V}_{t}^{\boldsymbol{\mathit{T}}}$
and since the matrix $\mathbf{A}_{r}\mathbf{D}\mathbf{A}_{t}^{\boldsymbol{\mathit{T}}}\in\mathbb{C}^{K\times K}$
is almost surely of full rank by definition, its SVD is given by $\mathbf{A}_{r}\mathbf{D}\mathbf{A}_{t}^{\boldsymbol{\mathit{T}}}=\mathbf{Q}_{r}\boldsymbol{\Lambda}\mathbf{Q}_{t}^{\boldsymbol{\mathit{H}}}$,
where $\mathbf{Q}_{r}\in\mathbb{C}^{K\times K}$ is such that $\mathbf{Q}_{r}\mathbf{Q}_{r}^{\boldsymbol{\mathit{H}}}=\mathbf{Q}_{r}^{\boldsymbol{\mathit{H}}}\mathbf{Q}_{r}\equiv\mathbf{I}_{K}$
(respectively for $\mathbf{Q}_{r}$) and $\boldsymbol{\Lambda}\in\mathbb{R}^{K\times K}$
is non - zero - diagonal, containing the singular values of $\mathbf{A}_{r}\mathbf{D}\mathbf{A}_{t}^{\boldsymbol{\mathit{T}}}$.
Thus, we arrive at the expression 
\begin{equation}
\boldsymbol{\Delta}=\mathbf{V}_{r}\mathbf{Q}_{r}\boldsymbol{\Lambda}\mathbf{Q}_{t}^{\boldsymbol{\mathit{H}}}\mathbf{V}_{t}^{\boldsymbol{\mathit{T}}}\equiv\mathbf{V}_{r}\mathbf{Q}_{r}\boldsymbol{\Lambda}\left(\mathbf{V}_{t}^{*}\mathbf{Q}_{t}\right)^{\boldsymbol{\mathit{H}}},
\end{equation}
which constitutes a valid SVD of $\boldsymbol{\Delta}$, since $\left(\mathbf{V}_{r}\mathbf{Q}_{r}\right)^{\boldsymbol{\mathit{H}}}\mathbf{V}_{r}\mathbf{Q}_{r}\equiv\mathbf{I}_{K}$
and $\left(\mathbf{V}_{t}^{*}\mathbf{Q}_{t}\right)^{\boldsymbol{\mathit{H}}}\mathbf{V}_{t}^{*}\mathbf{Q}_{t}\equiv\mathbf{I}_{K}$
and consequently, by the uniqueness of the singular values of a matrix,
$\boldsymbol{\Lambda}\equiv\boldsymbol{\Sigma}$. Therefore, we can
set $\mathbf{U}=\mathbf{V}_{r}\mathbf{Q}_{r}$ and $\mathbf{V}=\mathbf{V}_{t}^{*}\mathbf{Q}_{t}$.

If $\mathbf{V}_{r}^{i}\in\mathbb{C}^{1\times K}$ and $\mathbf{X}_{r}^{i}\in\mathbb{C}^{1\times K},i\in\mathbb{N}_{M_{r}}^{+}$
denote the $i$-th row of $\mathbf{V}_{r}$ and $\mathbf{X}_{r}$,
respectively, it holds that 
\begin{flalign}
\mu\left(U\right) & =\frac{M_{r}}{K}\sup_{i\in\mathbb{N}_{N}^{+}}\left\Vert \mathbf{V}_{r}^{i}\mathbf{Q}_{r}\right\Vert _{2}^{2}\nonumber \\
 & \equiv\frac{M_{r}}{K}\sup_{i\in\mathbb{N}_{N}^{+}}\left\Vert \mathbf{V}_{r}^{i}\right\Vert _{2}^{2}\label{eq:RECALL}\\
 & =\frac{M_{r}}{K}\sup_{i\in\mathbb{N}_{N}^{+}}\left\Vert \mathbf{X}_{r}^{i}\mathbf{A}_{r}^{-1}\right\Vert _{2}^{2}\label{eq:almostsurely-2}\\
 & \leq\frac{M_{r}}{K}\sup_{i\in\mathbb{N}_{N}^{+}}\dfrac{\left\Vert \mathbf{X}_{r}^{i}\right\Vert _{2}^{2}}{\sigma_{min}^{2}\left(\mathbf{A}_{r}\right)}\nonumber \\
 & =\dfrac{M_{r}}{\sigma_{min}^{2}\left(\mathbf{A}_{r}\right)}\nonumber \\
 & \equiv\dfrac{M_{r}}{\lambda_{min}\left(\mathbf{A}_{r}^{\boldsymbol{\mathit{H}}}\mathbf{A}_{r}\right)}\nonumber \\
 & \equiv\dfrac{M_{r}}{\lambda_{min}\left(\mathbf{A}_{r}^{\boldsymbol{\mathit{H}}}\mathbf{V}_{r}^{\boldsymbol{\mathit{H}}}\mathbf{V}_{r}\mathbf{A}_{r}\right)}
\end{flalign}
or, equivalently, 
\begin{equation}
\mu\left(U\right)\leq\dfrac{M_{r}}{\lambda_{min}\left(\mathbf{X}_{r}^{\boldsymbol{\mathit{H}}}\mathbf{X}_{r}\right)},\label{eq:first_bound}
\end{equation}
where \eqref{eq:almostsurely-2} stems from the fact that the columns
of $\mathbf{X}_{r}$ are linearly independent almost surely and, as
a result, $\mathbf{A}_{r}$ is almost surely invertible. Likewise,
regarding the coherence of the row space of $\boldsymbol{\Delta}$,
we get 
\begin{equation}
\mu\left(V\right)\leq\dfrac{M_{t}}{\lambda_{min}\left(\mathbf{X}_{t}^{\boldsymbol{\mathit{H}}}\mathbf{X}_{t}\right)}.\label{eq:second_bound}
\end{equation}

\subsubsection{Effective Bounding of $\lambda_{min}\left(\mathbf{X}_{t\left(r\right)}^{\boldsymbol{\mathit{H}}}\mathbf{X}_{t\left(r\right)}\right)$}

In the following, we will mainly focus on the matrix $\mathbf{X}_{t}^{\boldsymbol{\mathit{H}}}\mathbf{X}_{t}$,
since the respective analysis for $\mathbf{X}_{r}^{\boldsymbol{\mathit{H}}}\mathbf{X}_{r}$
is identical.

Due to \eqref{eq:first_bound}, we need a strictly positive lower
bound for $\lambda_{min}\left(\mathbf{X}_{t}^{\boldsymbol{\mathit{H}}}\mathbf{X}_{t}\right)$,
with 
\begin{equation}
\mathbf{X}_{t}^{\boldsymbol{\mathit{H}}}\mathbf{X}_{t}\triangleq\begin{bmatrix}M_{t} & \delta_{1,0} & \cdots & \delta_{K-1,0}\\
\delta_{1,0}^{*} & M_{t} & \cdots & \delta_{K-1,1}\\
\vdots & \vdots & \ddots & \vdots\\
\delta_{K-1,0}^{*} & \delta_{K-1,1}^{*} & \cdots & M_{t}
\end{bmatrix},\label{eq:Vandermonde}
\end{equation}
where 
\begin{gather}
\delta_{i,j}\triangleq{\displaystyle \sum_{m=0}^{M_{t}-1}}e^{j2\pi m\left(\alpha_{i}^{t}-\alpha_{j}^{t}\right)},\;\forall\,\left(i,j\right)\in\mathbb{N}_{K-1}\times\mathbb{N}_{K-1}.
\end{gather}
Obviously, $M_{t}\equiv\delta_{i,i},\forall\, i\in\mathbb{N}_{K-1}.$

Let us now apply Theorem 4 to the almost surely positive definite
matrix $\mathbf{M}\triangleq\mathbf{X}_{t}^{\boldsymbol{\mathit{H}}}\mathbf{X}_{t}\in\mathbb{C}^{K\times K}$.
The trace of $\mathbf{M}$ is simply $M_{t}K$. Hence, 
\begin{equation}
\tau=\dfrac{M_{t}K}{K}\equiv M_{t}.
\end{equation}
\begin{figure}
\centering\includegraphics[scale=0.6]{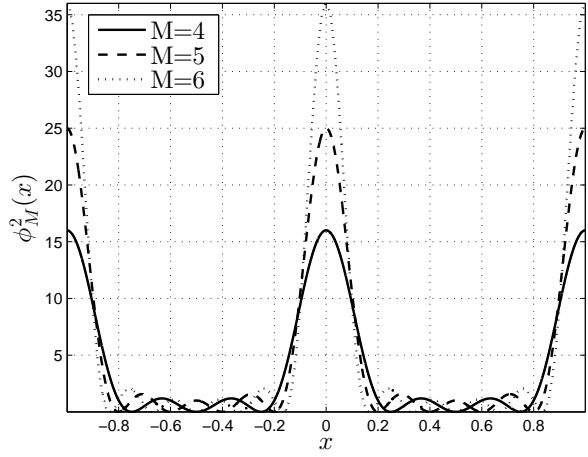}\caption{\label{fig:MagnitudeSinc}Square of $\phi_{M}\left(x\right)$ for
different values of $M$.}
\end{figure}

\noindent We also need the trace of $\mathbf{M}^{2}$. Since $\mathbf{M}$
is a Hermitian matrix, it is true that 
\begin{flalign}
\mathrm{tr}\left(\mathbf{M}^{2}\right) & =\sum_{k_{1}=0}^{K-1}\sum_{k_{2}=0}^{K-1}\left|\delta_{k_{1},k_{2}}\right|^{2}\nonumber \\
 & \equiv\sum_{k_{1}=0}^{K-1}\left\{ M_{t}^{2}+\sum_{\substack{k_{2}=0\\
k_{1}\neq k_{2}
}
}^{K-1}\left|{\displaystyle \sum_{m=0}^{M_{t}-1}}e^{j2\pi m\left(\alpha_{k_{1}}^{t}-\alpha_{k_{2}}^{t}\right)}\right|^{2}\right\} \nonumber \\
 & =\sum_{k_{1}=0}^{K-1}\left\{ M_{t}^{2}+\sum_{\substack{k_{2}=0\\
k_{1}\neq k_{2}
}
}^{K-1}\dfrac{\sin^{2}\left(\pi M_{t}\left(\alpha_{k_{1}}^{t}-\alpha_{k_{2}}^{t}\right)\right)}{\sin^{2}\left(\pi\left(\alpha_{k_{1}}^{t}-\alpha_{k_{2}}^{t}\right)\right)}\right\} \nonumber \\
 & \triangleq\sum_{k_{1}=0}^{K-1}\left\{ M_{t}^{2}+\sum_{\substack{k_{2}=0\\
k_{1}\neq k_{2}
}
}^{K-1}\phi_{M_{t}}^{2}\left(\alpha_{k_{1}}^{t}-\alpha_{k_{2}}^{t}\right)\right\} .\label{eq:trace_square}
\end{flalign}
At this point, it is instructive to at least qualitatively study the
behavior of the square of $\phi_{M}\left(x\right)$ (for a general
parameter $M\in\mathbb{N}^{+}$) defined above%
\footnote{This function constitutes a trivial variation of the so called Dirichlet
kernel or periodic sinc function.%
}. Fig. \ref{fig:MagnitudeSinc} shows the square of the function for
three different values of $M$. We can directly identify the following
useful fact:
\begin{fact}
If 
\begin{equation}
x\in\left[k,k+\dfrac{1}{2}\right],\:\forall\, k\in\mathbb{Z},
\end{equation}
then the entire sequence of local maxima of $\phi_{M}^{2}\left(x\right)$
is strictly decreasing. 
\end{fact}
\noindent Next, define the function $g:\mathbb{R}_{+}\rightarrow\mathbb{R}_{+}$
as 
\begin{equation}
g\left(x\right)\triangleq\begin{cases}
\left\lceil x\right\rceil -x, & \left\lceil x\right\rceil -x\leq\dfrac{1}{2}\\
x-\left\lfloor x\right\rfloor , & \text{otherwise}
\end{cases}
\end{equation}
and let 
\begin{equation}
\xi_{t}\triangleq\min_{\substack{\left(i,j\right)\in\mathbb{N}_{K-1}\times\mathbb{N}_{K-1}\\
i\neq j
}
}g\left(\left|\alpha_{i}^{t}-\alpha_{j}^{t}\right|\right)\in\left[0,\dfrac{1}{2}\right].
\end{equation}
The motivation for defining $\xi_{t}$ stems from the fact that the
function $\phi_{M}^{2}\left(x\right)$ is both periodic with unitary
period and symmetric and thus it suffices to study its behavior only
for $x\in\left[0,\dfrac{1}{2}\right]$. Then, due to Fact 1, we can
upper bound \eqref{eq:trace_square} as 
\begin{flalign}
\mathrm{tr}\left(\mathbf{M}^{2}\right) & \leq\sum_{k_{1}=0}^{K-1}\left\{ M_{t}^{2}+\left(K-1\right)\sup_{x\in\left[\xi_{t},\frac{1}{2}\right]}\phi_{M_{t}}^{2}\left(x\right)\right\} \\
 & \triangleq KM_{t}^{2}+K\left(K-1\right)\beta_{\xi_{t}}\left(M_{t}\right).
\end{flalign}
Consequently, by Theorem 4, 
\begin{flalign}
s^{2} & =\dfrac{\mathrm{tr}\left(\mathbf{M}^{2}\right)}{K}-M_{t}^{2}\nonumber \\
 & \leq M_{t}^{2}+\beta_{\xi}\left(M_{t}\right)\left(K-1\right)-M_{t}^{2}\nonumber \\
 & \equiv\beta_{\xi_{t}}\left(M_{t}\right)\left(K-1\right)
\end{flalign}
and consequently we can bound $\lambda_{min}\left(\mathbf{M}\right)\equiv\lambda_{min}\left(\mathbf{X}_{t}^{\boldsymbol{\mathit{H}}}\mathbf{X}_{t}\right)$
from below as 
\begin{equation}
\lambda_{min}\left(\mathbf{X}_{t}^{\boldsymbol{\mathit{H}}}\mathbf{X}_{t}\right)\geq M_{t}-\left(K-1\right)\sqrt{\beta_{\xi_{t}}\left(M_{t}\right)}.
\end{equation}
Respectively, we get 
\begin{equation}
\lambda_{min}\left(\mathbf{X}_{r}^{\boldsymbol{\mathit{H}}}\mathbf{X}_{r}\right)\geq M_{r}-\left(K-1\right)\sqrt{\beta_{\xi_{r}}\left(M_{r}\right)}.
\end{equation}
Therefore, for fixed $M_{t}$ and $M_{r}$, as long as 
\begin{equation}
K\le\max_{i\in\left\{ t,r\right\} }\left\{ \dfrac{M_{i}}{\sqrt{\beta_{\xi_{i}}\left(M_{i}\right)}}\right\} ,\label{eq:bound_K}
\end{equation}
the upper bounds 
\begin{flalign}
\mu\left(U\right) & \leq\dfrac{M_{t}}{M_{t}-\left(K-1\right)\sqrt{\beta_{\xi_{t}}\left(M_{t}\right)}}\quad\text{and}\\
\mu\left(V\right) & \leq\dfrac{M_{r}}{M_{r}-\left(K-1\right)\sqrt{\beta_{\xi_{r}}\left(M_{r}\right)}}
\end{flalign}
both hold true with probability 1. Consequently, 
\begin{equation}
\max\left\{ \mu\left(U\right),\mu\left(V\right)\right\} \leq\max_{i\in\left\{ t,r\right\} }\left\{ \dfrac{M_{i}}{M_{i}-\left(K-1\right)\sqrt{\beta_{\xi_{i}}\left(M_{i}\right)}}\right\} \triangleq\mu_{0}.
\end{equation}
Fig. \ref{fig:sup_over_xi} depicts the parameter $\sqrt{\beta_{\xi}\left(M\right)}$
(either for $\xi_{r}$ or $\xi_{t}$ and $M_{r}$ or $M_{t}$) as
a function of $\xi$ for three values of $M$. 
\begin{figure}
\centering\includegraphics[scale=0.6]{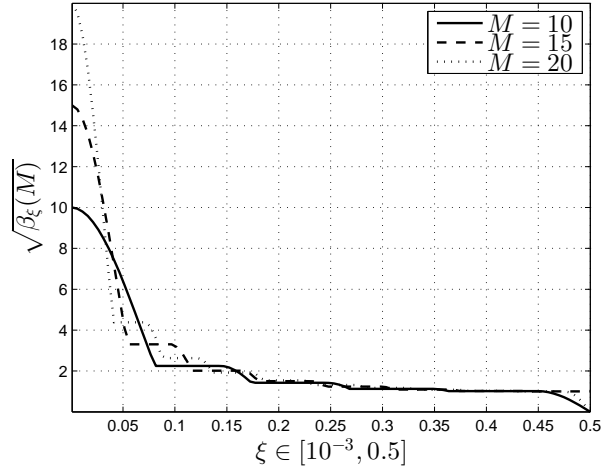}\caption{\label{fig:sup_over_xi}Values of $\sqrt{\beta_{\xi}\left(M\right)}$
for different values of $M$.}
\end{figure}

Further, the following fact also holds true, stemming directly from
the behavior of $\phi_{M}^{2}\left(x\right)$ (also see Fig. \ref{fig:MagnitudeSinc}).
\begin{figure}
\centering\includegraphics[scale=0.6]{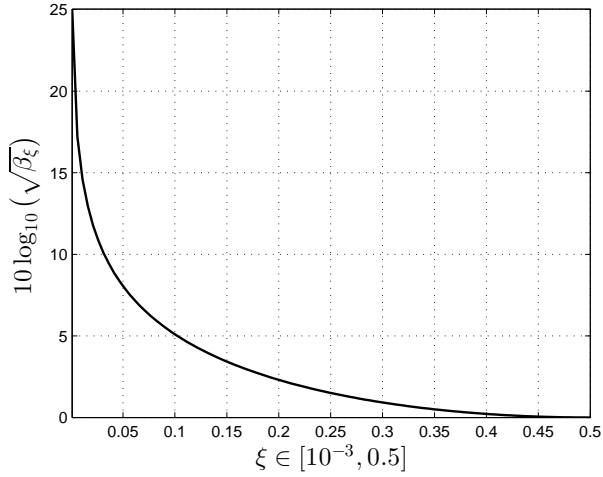}\caption{\label{fig:sup_over_all}Values of $\sqrt{\beta_{\xi}}$ in the interval
$\left[10^{-3},0.5\right]$.}
\end{figure}

\begin{fact}
Restricting our attention at the principal period of $\phi_{M}^{2}\left(x\right)$
($\left[-1,1\right]$), as M is getting larger, the main lobe of the
function gets taller and narrower. Also, by definition of $\phi_{M}^{2}\left(x\right)$,
it follows that for any $M\in\mathbb{N}^{+}$ (and even when $M\rightarrow\infty$),
the function possesses only one singularity (in the principal period)
occurring at $x\equiv0$. 
\end{fact}
Now, assume that $\xi\triangleq\min\left\{ \xi_{r},\xi_{t}\right\} \neq0$.
Then, we can write 
\begin{flalign}
\beta_{\xi_{t}}\left(M_{t}\right) & \equiv\sup_{x\in\left[\xi_{t},\frac{1}{2}\right]}\phi_{M_{t}}^{2}\left(x\right)\nonumber \\
 & \le\sup_{x\in\left[\xi,\frac{1}{2}\right]}\phi_{M_{t}}^{2}\left(x\right)\nonumber \\
 & \leq\sup_{\substack{x\in\left[\xi,\frac{1}{2}\right]\\
M_{t}\in\mathbb{N}^{+}
}
}\phi_{M_{t}}^{2}\left(x\right)\equiv\sup_{\substack{x\in\left[\xi,\frac{1}{2}\right]\\
M_{r}\in\mathbb{N}^{+}
}
}\phi_{M_{r}}^{2}\left(x\right)\triangleq\beta_{\xi}.\label{eq:sup_over_all}
\end{flalign}
Since $\xi\neq0$ by assumption and due to Fact 2, the supremum appearing
in \eqref{eq:sup_over_all} will be always finite and, consequently
the respective bound will be non trivial.

Therefore, in this case, for fixed $K$, \eqref{eq:bound_K} becomes
\begin{equation}
\min_{i\in\left\{ t,r\right\} }M_{i}\ge K\sqrt{\beta_{\xi}}=\mathcal{O}\left(K\right).
\end{equation}
Passing to the limit with respect to $M_{t}$ (respectively $M_{r}$),
we get 
\begin{flalign}
\limsup_{M_{t}\rightarrow\infty}\mu\left(U\right) & \leq\limsup_{M_{t}\rightarrow\infty}\dfrac{M_{t}}{M_{t}-\left(K-1\right)\sqrt{\beta_{\xi}}}\nonumber \\
 & \equiv\lim_{M_{t}\rightarrow\infty}\dfrac{M_{t}}{M_{t}-\left(K-1\right)\sqrt{\beta_{\xi}}}=1
\end{flalign}
and since $\mu\left(U\right)\geq1$ by definition \cite{Candes&Recht2009},
it must be true that, \textit{in the limit}, $\mu\left(U\right)\equiv1$.
Likewise, it will also be true that, \textit{in the limit}, $\mu\left(V\right)\equiv1$.

If additionally $\xi$ is safely bounded away from zero, then $\beta_{\xi}$
will be small (due to Fact 1) and thus for sufficiently large but
finite $M_{t}$ and $M_{r}$, 
\begin{equation}
\mu\left(U\right)\approx\mu\left(V\right)\approx1.
\end{equation}
An experimental calculation of the square root of the supremum $\beta_{\xi}$
for $\xi\in\left[10^{-3},0.5\right]$ shown in Fig. \ref{fig:sup_over_all}
(in logarithmic scale) can validate and possibly strengthen our theoretical
arguments presented above. We observe that $\sqrt{\beta_{\xi}}$ over
$\xi$ constitutes a hyperbolic function, converging rapidly to 1
(linear scale), which corresponds to its minimum value.

Finally, regarding the Assumption $\mathbf{A1}$, by a simple argument
involving the Cauchy - Schwarz inequality, it can be shown that \cite{Candes&Recht2009},
in the general case, one can choose 
\begin{equation}
\mu_{1}\triangleq\mu_{0}\sqrt{K}=\max_{i\in\left\{ t,r\right\} }\left\{ \dfrac{M_{i}\sqrt{K}}{M_{i}-\left(K-1\right)\sqrt{\beta_{\xi_{i}}\left(M_{i}\right)}}\right\} ,
\end{equation}
holding true also with probability 1. The results presented in the
statement of Theorem 3 readily follow.\end{proof}

In colocated MIMO Radar systems, due to the need for unambiguous angle
estimation (target detection), it is very common to assume that $\theta_{i}\in\left[-\pi/2,\pi/2\right],\forall\, i\in\mathbb{N}_{K-1}$.
Also, especially for the case where ULAs are employed for transmission
and reception, another common assumption is to choose $d_{r}\equiv d_{t}=\lambda/2$.
Under this setting, the following lemma provides a simple sufficient
condition, which guarantees that the asymptotics of Theorem 3 hold
true and, as a result, that for a sufficiently large number of transmission/reception
antennas, the coherence of $\boldsymbol{\Delta}$ will be small. 
\begin{lem}
\textbf{\textup{(ULA Pairs Coherence Control)}} Consider the hypotheses
and definitions of Theorem 3. Set $d_{r}\equiv d_{t}=\lambda/2$.
If additionally 
\begin{gather}
\left(\theta_{i},\theta_{j}\right)\in\mathcal{A},\quad\text{with}\\
\mathcal{A}\triangleq\left\{ \left.\left(x,y\right)\in\left[-\dfrac{\pi}{2},\dfrac{\pi}{2}\right]^{2}\right|\eta\leq\left|y-x\right|\leq\pi-\eta\right\} \label{eq:SET_LEMMA2}
\end{gather}
$\forall\,\left(i,j\right)\in\mathbb{N}_{K-1}\times\mathbb{N}_{K-1}$
with $i\neq j$ and for some apriori chosen $\eta\in\left(0,\dfrac{\pi}{2}\right]$,
then 
\begin{equation}
\xi=1-\cos\left(\dfrac{\eta}{2}\right)\in\left(0,\dfrac{2-\sqrt{2}}{2}\right]\label{eq:xi_worstcase}
\end{equation}
and the asymptotics of Theorem 3 always hold true. In particular,
the higher the value of $\eta$, the higher the value of $\xi$ and
the lower the coherence of $\boldsymbol{\Delta}$. 
\end{lem}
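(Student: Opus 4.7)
The plan is to determine $\xi$ as the value of a constrained min-max over pairs of admissible target angles. Substituting $d_{r}=d_{t}=\lambda/2$ into \eqref{eq:xixixixi} reduces $\xi$ to $\min_{i\neq j}g\bigl(\tfrac{1}{2}|\sin\theta_{i}-\sin\theta_{j}|\bigr)$. Applying the sum-to-product identity $\sin\theta_{i}-\sin\theta_{j}=2\cos\bigl(\tfrac{\theta_{i}+\theta_{j}}{2}\bigr)\sin\bigl(\tfrac{\theta_{i}-\theta_{j}}{2}\bigr)$ and introducing the change of variables $\phi_{ij}\triangleq\tfrac{1}{2}|\theta_{i}-\theta_{j}|$ and $\psi_{ij}\triangleq\tfrac{1}{2}(\theta_{i}+\theta_{j})$ recasts the argument of $g$ as $\alpha_{ij}\triangleq|\cos\psi_{ij}|\sin\phi_{ij}$.

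Under the membership $(\theta_{i},\theta_{j})\in\mathcal{A}$, one has $\phi_{ij}\in[\eta/2,(\pi-\eta)/2]$, while the box constraint $\theta_{i},\theta_{j}\in[-\pi/2,\pi/2]$ forces $|\psi_{ij}|\leq\pi/2-\phi_{ij}$. For fixed $\phi_{ij}$ the quantity $\alpha_{ij}$ is minimized over feasible $\psi_{ij}$ at $|\psi_{ij}|=\pi/2-\phi_{ij}$, giving $\sin^{2}\phi_{ij}$, and maximized at $\psi_{ij}=0$, giving $\sin\phi_{ij}$. Sweeping $\phi_{ij}$ then yields the tight envelope $\alpha_{ij}\in[\sin^{2}(\eta/2),\cos(\eta/2)]\subset(0,1)$, so the argument of $g$ is uniformly bounded away from both endpoints of its period.

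Because $g(x)=\min(x,1-x)$ on $[0,1]$, I would then show $g(\alpha)\geq 1-\cos(\eta/2)$ on this envelope by a two-case split: if $\alpha\leq 1/2$, then $g(\alpha)=\alpha\geq\sin^{2}(\eta/2)\geq 1-\cos(\eta/2)$, where the last step uses the half-angle identity $\sin^{2}(\eta/2)=(1-\cos\eta)/2$ and reduces to the trivial $\cos(\eta/2)\leq 1$; if $\alpha>1/2$, then $g(\alpha)=1-\alpha\geq 1-\cos(\eta/2)$ directly. Tightness is exhibited by the configuration $\theta_{i}=-\theta_{j}=(\pi-\eta)/2$, which lies in $\mathcal{A}$ with $|\theta_{i}-\theta_{j}|=\pi-\eta$, produces $\alpha_{ij}=\cos(\eta/2)$, and hence $g(\alpha_{ij})=1-\cos(\eta/2)$, establishing \eqref{eq:xi_worstcase}. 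The range $\xi\in(0,(2-\sqrt{2})/2]$ for $\eta\in(0,\pi/2]$ then follows from monotonicity of $1-\cos(\eta/2)$, with the upper endpoint attained at $\eta=\pi/2$. Since $\xi>0$, Theorem 3's hypothesis $\xi\neq 0$ is met and the asymptotics apply; the same monotonicity gives the final claim that larger $\eta$ yields larger $\xi$ and hence, via $\beta_{\xi}$, lower coherence of $\boldsymbol{\Delta}$.

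The hard part is not the computation but the correct identification of the binding extremum. Intuition might point to the small-gap case $\phi_{ij}=\eta/2$ with $|\psi_{ij}|=\pi/2-\eta/2$ (for which $\alpha_{ij}=\sin^{2}(\eta/2)$), but a direct comparison of $\sin^{2}(\eta/2)$ with $1-\cos(\eta/2)$ shows the former is strictly larger on $(0,\pi/2]$, so it is actually the large-gap configuration with targets near the opposite edges of the aperture that minimizes $g$, via the descending branch $1-\alpha$. I would therefore pay particular attention to the case split around $\alpha=1/2$ to avoid mis-locating the worst case.
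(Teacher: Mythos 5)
Your proof is correct and reaches the same conclusion as the paper's, but the key computational step is organized differently. The paper slices $\mathcal{A}_{+}$ along the diagonal lines $y=x+\beta$, $\beta\in\left[\eta,\pi-\eta\right]$, shows $f_{+}\left(x;\beta\right)=\sin\left(x+\beta\right)-\sin\left(x\right)$ is strictly concave in $x$, locates its maximum at the midpoint $x=-\beta/2$ (value $2\sin\left(\beta/2\right)$) and its infimum at the boundary $x=-\pi/2$ (value $1-\cos\left(\beta\right)$), and then uses monotonicity in $\beta$ to get the envelope $\left[1-\cos\left(\eta\right),2\cos\left(\eta/2\right)\right]$ for $\left|\sin\theta_{i}-\sin\theta_{j}\right|$. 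You instead apply the sum-to-product identity and the change of variables to half-difference $\phi$ and mean $\psi$, which turns the extremization into the transparent statement that $\cos\left(\psi\right)\sin\left(\phi\right)$ is monotone in $\left|\psi\right|$ subject to $\left|\psi\right|\le\pi/2-\phi$; your envelope $\left[\sin^{2}\left(\eta/2\right),\cos\left(\eta/2\right)\right]$ is identical to the paper's after halving, since $\sin^{2}\left(\eta/2\right)=\left(1-\cos\eta\right)/2$. From there both arguments coincide: the argument of $g$ lives in a subinterval of $\left(0,1\right)$ where $g\left(x\right)=\min\left(x,1-x\right)$, and the binding case is the upper endpoint because $\left(1-\cos\left(\eta/2\right)\right)\left(1+\cos\left(\eta/2\right)\right)=\sin^{2}\left(\eta/2\right)>1-\cos\left(\eta/2\right)$. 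Two small things you do better than the paper: you exhibit the explicit tightness witness $\theta_{i}=-\theta_{j}=\left(\pi-\eta\right)/2$ (the paper only implicitly contains it as the maximizer $\left(-\beta/2,\beta/2\right)$ at $\beta=\pi-\eta$), and you correctly frame the result as a worst-case lower bound $\xi\ge1-\cos\left(\eta/2\right)$ attained by that configuration, whereas the paper's final display writes the chain as an equality for a generic admissible angle set, which is a slight overstatement. Your closing remark about not mis-locating the extremum at the small-gap configuration is exactly the right point of care.
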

\begin{proof}[Proof of Lemma 2]See Appendix B.\end{proof}

\subsection{Near Optimal Recoverability of $\boldsymbol{\Delta}$}

Using Lemma 1, we can now directly combine Theorem 2 with the coherence
results presented in the previous subsection, producing (asymptotic)
bounds regarding the number of observations required for the exact
reconstruction of the matrix $\boldsymbol{\Delta}$ by solving the
convex program \eqref{eq:convex1}. Specifically for ULA transmitter/receiver
pairs, we present the following interesting result. Since the proof
is straightforward, it is omitted. 
\begin{thm}
\textbf{\textup{(Near Optimal Recoverability for ULA Pairs)}} Consider
the hypotheses of Theorem 3 and assume that $\xi$ is safely bounded
away from zero. Also, set $M\triangleq\max\left\{ M_{r},M_{t}\right\} $.
Suppose we observe $m$ entries of $\boldsymbol{\Delta}$ with locations
sampled uniformly at random. Then, for $M_{r}$ and $M_{t}$ sufficiently
large satisfying 
\begin{equation}
\min_{i\in\left\{ t,r\right\} }M_{i}\ge K\sqrt{\beta_{\xi}}
\end{equation}
for a fixed number of targets $K$, $\boldsymbol{\Delta}$ is strongly
incoherent with parameter 
\begin{equation}
\mu\le\mu_{0}\sqrt{K}\approx\sqrt{K}
\end{equation}
and there exist positive numerical constants $C_{1}$ and $C_{2}$
such that if 
\begin{flalign}
m & \gtrsim C_{1}K^{4}M\log^{2}M=\mathcal{O}\left(M\log^{2}M\right)\quad\text{or}\\
m & \gtrsim C_{2}K^{2}M\log^{6}M=\mathcal{O}\left(M\log^{6}M\right),
\end{flalign}
the minimizer to the program \eqref{eq:convex1} is unique and equal
to $\boldsymbol{\Delta}$ with probability at least $1-M^{-3}$. 
\end{thm}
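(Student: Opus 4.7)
The plan is to chain together three results already established: Theorem 3 (to get explicit values of $\mu_{0}$ and $\mu_{1}$ in the asymptotic ULA regime), Lemma 1 (to upgrade ordinary incoherence into strong incoherence), and Theorem 2 (to turn the strong incoherence parameter into a sample-complexity statement). Because every piece is stated in a directly usable form, I expect the argument to be essentially a substitution; there is no real obstacle beyond keeping track of the powers of $K$ and $\mu$.

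First, I would invoke Theorem 3 with the assumption $\xi\triangleq\min\{\xi_{r},\xi_{t}\}>0$ safely bounded away from zero. Under this hypothesis and the condition $\min_{i\in\{t,r\}}M_{i}\ge K\sqrt{\beta_{\xi}}$, Theorem 3 guarantees that $\boldsymbol{\Delta}$ satisfies $\mathbf{A0}$ and $\mathbf{A1}$ almost surely with
\[
\mu_{0}\le\max_{i\in\{t,r\}}\frac{M_{i}}{M_{i}-(K-1)\sqrt{\beta_{\xi}}}\approx 1,\qquad \mu_{1}=\mu_{0}\sqrt{K}\approx\sqrt{K},
\]
for $M_{t},M_{r}$ sufficiently large. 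Thus $\boldsymbol{\Delta}$ is incoherent with parameters $(\mu_{0},\mu_{1})$, both controlled explicitly.

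Second, I would apply Lemma 1 to conclude that this incoherence already implies the strong incoherence property with parameter $\mu\le\mu_{0}\sqrt{K}\approx\sqrt{K}$, since $\max\{\mu_{0}^{s},\mu_{1}\}\le\mu_{0}\sqrt{K}$. This is exactly the quantity appearing in the statement.

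Finally, I would invoke Theorem 2 with rank $r=K$, dimension $N\triangleq\max\{M_{r},M_{t}\}=M$, and incoherence parameter $\mu\lesssim\sqrt{K}$. Substituting $\mu^{2}\lesssim K$ into the two alternative thresholds of Theorem 2 gives
\[
m\ge C_{1}\mu^{4}Nr^{2}\log^{2}N\;\lesssim\;C_{1}K^{4}M\log^{2}M,\qquad m\ge C_{2}\mu^{2}Nr\log^{6}N\;\lesssim\;C_{2}K^{2}M\log^{6}M,
\]
either of which suffices for the nuclear-norm program \eqref{eq:convex1} to recover $\boldsymbol{\Delta}$ uniquely with probability at least $1-M^{-3}$. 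Since $K$ is treated as fixed, both thresholds are $\mathcal{O}(M\log^{2}M)$ and $\mathcal{O}(M\log^{6}M)$ in $M$, which is the claimed near-optimal scaling; the almost-sure qualifier is inherited from Theorem 3. The only care needed is to emphasize that the probability statement is over the random sampling of the $m$ observed entries and is conditional on the almost-sure event that the DOAs are distinct and yield $\xi>0$.
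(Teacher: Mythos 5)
Your argument is exactly the route the paper intends: the paper omits the proof as ``straightforward,'' explicitly indicating that one chains Theorem 3 (to get $\mu_{0}\approx1$ and $\mu_{1}=\mu_{0}\sqrt{K}$ in the large-$M_{t},M_{r}$ regime), Lemma 1 (to obtain strong incoherence with $\mu\le\mu_{0}\sqrt{K}$), and Theorem 2 with $r=K$ and $N=M$, which is precisely your substitution yielding the $C_{1}K^{4}M\log^{2}M$ and $C_{2}K^{2}M\log^{6}M$ thresholds. Your bookkeeping of the powers of $K$ and the probability statement is correct, so the proposal matches the paper's proof in both substance and approach.
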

Roughly speaking, Theorem 5 implies that as long as ULAs are concerned,
for a sufficiently large number of transmission and reception antennas
and for a fixed and relatively small number of targets, matrix completion
is exact if the number of observations is at least of an order of
$M\mathrm{polylog}\left(M\right)$, that is, matrix completion is
exact for a slightly larger number of observations than the information
theoretic limit (see \cite{Candes&Tao2010} for details).

\section{Coherence of $\boldsymbol{\Delta}$ for Arbitrary 2D Arrays}

It is possible to generalize Theorem 3 for a far more general case,
that is, when the transmitter and the receiver employ arbitrary $2$-dimensional
arrays. However, in this case, since our knowledge about the specific
characteristics of the topologies of the arrays involved is very restricted,
our bounds, although tight, are not expected to be as easily handleable
as in the ULA case (see Theorem 3 and Lemma 2). Next, we state and
prove the following theorem in this respect. 
\begin{thm}
\textbf{\textup{(Coherence for Arbitrary 2D Arrays)}} Consider an
arbitrary array transmitter - receiver pair equipped with $M_{t}$
and $M_{r}$ antennas, respectively. Assume that the set of target
angles $\left\{ \theta_{k}\right\} _{k\in\mathbb{N}_{K-1}}$ consists
of almost surely distinct members and that 
\begin{equation}
\left(\theta_{i},\theta_{j}\right)\in\mathcal{A}\subseteq\mathbb{R}^{2}-\left\{ \left.\left(x,y\right)\in\mathbb{R}^{2}\right|x\neq y\right\} 
\end{equation}
$\forall\,\left(i,j\right)\in\mathbb{N}_{K-1}\times\mathbb{N}_{K-1}$
with $i\neq j$, where $\mathcal{A}$ constitutes a nominal point
set for all admissible angle pair combinations. Also, let the abstract
sets $\mathfrak{T}$ and $\mathfrak{R}$ contain all the essential
information regarding the transmitter and receiver topologies, also
assumed fixed and known apriori. Then, for any $M_{t}$ and $M_{r}$,
as long as 
\begin{equation}
K\le\max_{i\in\left\{ t,r\right\} }\left\{ \dfrac{M_{i}}{\sqrt{\beta_{i}}}\right\} ,
\end{equation}
the associated matrix $\boldsymbol{\Delta}$ obeys the assumptions
$\mathbf{A0}$ and $\mathbf{A1}$ with 
\begin{flalign}
\mu_{0} & \triangleq\max_{i\in\left\{ t,r\right\} }\left\{ \dfrac{M_{i}}{M_{i}-\left(K-1\right)\sqrt{\beta_{i}}}\right\} \quad\text{and}\label{eq:mu_o-1}\\
\mu_{1} & \triangleq\max_{i\in\left\{ t,r\right\} }\left\{ \dfrac{M_{i}\sqrt{K}}{M_{i}-\left(K-1\right)\sqrt{\beta_{i}}}\right\} \label{eq:mu_1-1}
\end{flalign}
with probability $1$ and where 
\begin{flalign}
\beta_{t\left(r\right)} & \triangleq\sup_{\left(x,y\right)\in\mathcal{A}}\left|\varphi_{t\left(r\right)}\left(x,y\left|\mathfrak{T}\left(\mathfrak{R}\right)\right.\right)\right|^{2}\in\left[0,M_{t\left(r\right)}^{2}\right),\label{eq:strange_supremum}
\end{flalign}
with $\varphi_{t\left(r\right)}:\mathcal{A}\rightarrow\mathbb{C}$
given by 
\begin{equation}
\varphi_{t\left(r\right)}\left(x,y\left|\mathfrak{T}\left(\mathfrak{R}\right)\right.\right)\triangleq{\displaystyle \sum_{m=0}^{M_{t\left(r\right)}-1}}\exp\left(j2\pi\mathbf{r}_{t\left(r\right)}^{\boldsymbol{T}}\left(m\right)\left(\mathcal{T}\left(x\right)-\mathcal{T}\left(y\right)\right)\right),
\end{equation}
where 
\begin{flalign}
\mathbf{r}_{t\left(r\right)}\left(l\right) & \triangleq\dfrac{1}{\lambda}\left[x_{l}^{t\left(r\right)}\, y_{l}^{t\left(r\right)}\right]^{\boldsymbol{T}}\in\mathbb{R}^{2\times1},l\in\mathbb{N}_{M_{t\left(r\right)}-1}\\
\mathcal{T}\left(x\right) & \triangleq\begin{bmatrix}\cos\left(x\right)\\
\sin\left(x\right)
\end{bmatrix}\in\mathbb{R}^{2\times1},
\end{flalign}
with the collection of vectors $\left\{ \left[x_{l}^{t\left(r\right)}\, y_{l}^{t\left(r\right)}\right]^{\boldsymbol{T}}\right\} _{l\in\mathbb{N}_{M_{t\left(r\right)}-1}}$
denoting the $2$-dimensional antenna coordinates of the respective
array. 
\end{thm}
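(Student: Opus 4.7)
The strategy is to imitate the proof of Theorem 3 almost line for line, since its first subsection (the SVD characterization of $\boldsymbol{\Delta}$) was written in full generality, relying only on the factorization $\boldsymbol{\Delta}=\mathbf{X}_r\mathbf{D}\mathbf{X}_t^{\boldsymbol{T}}$ with $\mathbf{D}$ almost surely of full rank, and never on the Vandermonde structure. I would therefore invoke that subsection verbatim, reducing the problem to producing strictly positive lower bounds on $\lambda_{min}(\mathbf{X}_t^{\boldsymbol{H}}\mathbf{X}_t)$ and $\lambda_{min}(\mathbf{X}_r^{\boldsymbol{H}}\mathbf{X}_r)$ via
\begin{equation*}
\mu(U)\le\frac{M_r}{\lambda_{min}(\mathbf{X}_r^{\boldsymbol{H}}\mathbf{X}_r)},\qquad\mu(V)\le\frac{M_t}{\lambda_{min}(\mathbf{X}_t^{\boldsymbol{H}}\mathbf{X}_t)}.
\end{equation*}

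Focusing on $\mathbf{M}\triangleq\mathbf{X}_t^{\boldsymbol{H}}\mathbf{X}_t$ (the receiver side is symmetric), I would observe that every entry of $\mathbf{X}_t$ has unit modulus, so the diagonal of $\mathbf{M}$ equals $M_t\mathbf{I}_K$ and hence $\tau=M_t$, while a direct computation identifies the $(k_1,k_2)$-off-diagonal entry as precisely $\varphi_t(\theta_{k_2},\theta_{k_1}|\mathfrak{T})$. Therefore
\begin{equation*}
\mathrm{tr}(\mathbf{M}^2)=KM_t^2+\sum_{k_1\ne k_2}|\varphi_t(\theta_{k_1},\theta_{k_2}|\mathfrak{T})|^2.
\end{equation*}
The admissibility hypothesis $(\theta_i,\theta_j)\in\mathcal{A}$ for $i\ne j$, together with the definition of $\beta_t$, dominates every off-diagonal square by $\beta_t$, giving $s^2\le(K-1)\beta_t$. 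Plugging into Theorem 4 produces $\lambda_{min}(\mathbf{M})\ge M_t-(K-1)\sqrt{\beta_t}$, and an identical derivation yields $\lambda_{min}(\mathbf{X}_r^{\boldsymbol{H}}\mathbf{X}_r)\ge M_r-(K-1)\sqrt{\beta_r}$. The hypothesis $K\le\max_{i\in\{t,r\}}M_i/\sqrt{\beta_i}$ is precisely what keeps at least one of these lower bounds strictly positive, making the resulting coherence estimates nontrivial. Substituting into the two displayed inequalities gives $\mu_0$ as stated, and the Cauchy--Schwarz argument from \cite{Candes&Recht2009} then delivers $\mu_1=\mu_0\sqrt{K}$. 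The strict bound $\beta_i<M_i^2$ follows from the triangle inequality on $\varphi_i$, which is saturated only when $\mathcal{T}(x)=\mathcal{T}(y)$, a case excluded by the assumption on $\mathcal{A}$.

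The main obstacle here is conceptual rather than computational. In the ULA case the off-diagonal entries collapsed to a single-variable Dirichlet kernel $\phi_M^2$, and its monotone-envelope properties (Facts 1 and 2) allowed the supremum over all admissible angle pairs to be replaced by a supremum over a tiny one-dimensional interval, yielding the closed-form constants $\beta_{\xi_t}(M_t)$ and $\beta_\xi$ and the asymptotic optimality $\mu(U),\mu(V)\to 1$. For an arbitrary planar topology no such envelope structure is available: $\varphi_{t(r)}(\cdot,\cdot|\mathfrak{T}(\mathfrak{R}))$ is a genuinely two-dimensional oscillatory function whose level sets are entirely dictated by the geometry encoded in $\mathfrak{T}$ and $\mathfrak{R}$. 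The statement is therefore forced to leave the bound implicit, in the form of the supremum \eqref{eq:strange_supremum}, with the task of evaluating or further controlling $\beta_{t(r)}$ deferred to the practitioner, who need only inspect the squared magnitude of a single complex-valued function of two real variables over the prescribed set $\mathcal{A}$.
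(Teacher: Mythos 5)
Your proposal is correct and follows the paper's own proof essentially verbatim: the SVD/QR reduction to bounding $\lambda_{min}\left(\mathbf{X}_{t\left(r\right)}^{\boldsymbol{\mathit{H}}}\mathbf{X}_{t\left(r\right)}\right)$ is carried over from Theorem 3 exactly as the paper does, and the trace bound of Theorem 4 with the off-diagonal entries of $\mathbf{X}_{t}^{\boldsymbol{\mathit{H}}}\mathbf{X}_{t}$ identified as values of $\varphi_{t\left(r\right)}$ on $\mathcal{A}$ yields $s^{2}\le\left(K-1\right)\beta_{t\left(r\right)}$ and hence the stated $\mu_{0}$ and $\mu_{1}$, precisely as in the paper. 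Your closing observation about why the bound must be left implicit for arbitrary topologies mirrors the paper's own Remark 1.
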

\begin{proof}[Proof of Theorem 6]Again, we divide the proof into
the following subsections, in similar fashion as in the proof of Theorem
3.

\subsubsection{Characterization of the SVD of $\boldsymbol{\Delta}$}

We now consider the almost surely rank-$K$ matrix 
\begin{equation}
\boldsymbol{\Delta}=\mathbf{X}_{r}\mathbf{D}\mathbf{X}_{t}^{\boldsymbol{T}}\in\mathbb{C}^{M_{t}\times M_{r}},
\end{equation}
exactly as defined in Subsection C of Section II. Of course, since
$\boldsymbol{\Delta}$ is assumed to be of rank-$K$ almost surely,
the members of the set of angles $\left\{ \theta_{k}\right\} _{k\in\mathbb{N}_{K-1}}$
are implicitly assumed to be almost surely distinct.

By almost identical reasoning as in the proof of Theorem 3, the coherences
of the column and row spaces of $\boldsymbol{\Delta}$, $U$ and $V$,
can be upper bounded using \eqref{eq:first_bound} and \eqref{eq:second_bound},
respectively.

\subsubsection{Bounding $\lambda_{min}\left(\mathbf{X}_{t\left(r\right)}^{\boldsymbol{\mathit{H}}}\mathbf{X}_{t\left(r\right)}\right)$}

We directly focus on the matrix $\mathbf{X}_{t}^{\boldsymbol{\mathit{H}}}\mathbf{X}_{t}\in\mathbb{C}^{K\times K}$
defined as in \eqref{eq:Vandermonde}, where this time, $\forall\,\left(i,j\right)\in\mathbb{N}_{K-1}\times\mathbb{N}_{K-1}$,
\begin{gather}
\delta_{i,j}\triangleq{\displaystyle \sum_{m=0}^{M_{t}-1}}e^{j2\pi\mathbf{r}_{t}^{\boldsymbol{T}}\left(m\right)\left(\mathcal{T}\left(\theta_{i}\right)-\mathcal{T}\left(\theta_{j}\right)\right)},\label{eq:aha}
\end{gather}
with $M_{t}\equiv\delta_{i,i},\forall\, i\in\mathbb{N}_{K-1}$.

Applying Theorem 3 to $\mathbf{M}\triangleq\mathbf{X}_{t}^{\boldsymbol{\mathit{H}}}\mathbf{X}_{t}\in\mathbb{C}^{K\times K}$,
we get $\tau\equiv M_{t}$ and, concerning the trace of $\mathbf{M}^{2}$,
it is true that
\begin{flalign*}
\mathrm{tr}\left(\mathbf{M}^{2}\right) & =\sum_{k_{1}=0}^{K-1}\sum_{k_{2}=0}^{K-1}\left|\delta_{k_{1},k_{2}}\right|^{2}
\end{flalign*}
\[
=\sum_{k_{1}=0}^{K-1}\left\{ M_{t}^{2}+\sum_{\substack{k_{2}=0\\
k_{1}\neq k_{2}
}
}^{K-1}\left|{\displaystyle \sum_{m=0}^{M_{t}-1}}e^{j2\pi\mathbf{r}_{t}^{\boldsymbol{T}}\left(m\right)\left(\mathcal{T}\left(\theta_{k_{1}}\right)-\mathcal{T}\left(\theta_{k_{2}}\right)\right)}\right|^{2}\right\} .
\]
Let all the assumptions of the statement of Theorem 6 hold true. Then,
we can define a bivariate function $\varphi_{t}:\mathcal{A}\rightarrow\mathbb{C}$
given by%
\footnote{The form of this function is identical with \eqref{eq:aha}. However,
it is defined on a different set.%
} 
\begin{equation}
\varphi_{t}\left(x,y\left|\mathfrak{T}\right.\right)\triangleq{\displaystyle \sum_{m=0}^{M_{t}-1}}e^{j2\pi\mathbf{r}_{t}^{\boldsymbol{T}}\left(m\right)\left(\mathcal{T}\left(x\right)-\mathcal{T}\left(y\right)\right)},
\end{equation}
whose norm can be bounded as 
\begin{flalign}
\left|\varphi_{t}\left(x,y\left|\mathfrak{T}\right.\right)\right|^{2} & \leq\sup_{\left(x,y\right)\in\mathcal{A}}\left|\varphi_{t}\left(x,y\left|\mathfrak{T}\right.\right)\right|^{2}\in\left[0,M_{t}^{2}\right),
\end{flalign}
Observe that in the expressions above, the supremum is taken conditional
on the set $\mathfrak{T}$ and, as a result, \textit{it is highly
dependent on the array topology of the transmitter and therefore also
on} $M_{t}$, but of course independent of any combination of the
angles.

Then, we can bound $\mathrm{tr}\left(\mathbf{M}^{2}\right)$ as 
\begin{flalign}
\mathrm{tr}\left(\mathbf{M}^{2}\right) & \leq\sum_{k_{1}=0}^{K-1}\left\{ M_{t}^{2}+\left(K-1\right)\sup_{\left(x,y\right)\in\mathcal{A}}\left|\varphi_{t}\left(x,y\left|\mathfrak{T}\right.\right)\right|^{2}\right\} \nonumber \\
 & \triangleq KM_{t}^{2}+K\left(K-1\right)\beta_{t}
\end{flalign}
and the results follow using similar procedure to the respective part
of the proof of Theorem 3.\end{proof} 
\begin{rem}
Indeed, one may claim that Theorem 6 is mostly of theoretical value
and that its practical applicability is limited. On the one hand,
this is true, since, except for the ULA case, for which explicit results
are available (see Theorem 3 and Lemma 2), the computation of closed
form expressions for the supremum appearing in \eqref{eq:strange_supremum}
is almost impossible. Actually, bounding norms of sums of complex
exponentials, whose exponents are arbitrary real variable functions
constitutes a difficult mathematical problem. However, from an engineering
point of view, for a given pair of transmitter - receiver topologies,
we can always compute the aforementioned supremum empirically, as
it becomes clear by the following example. 
\begin{figure*}
\centering\subfloat[]{\includegraphics[scale=0.6]{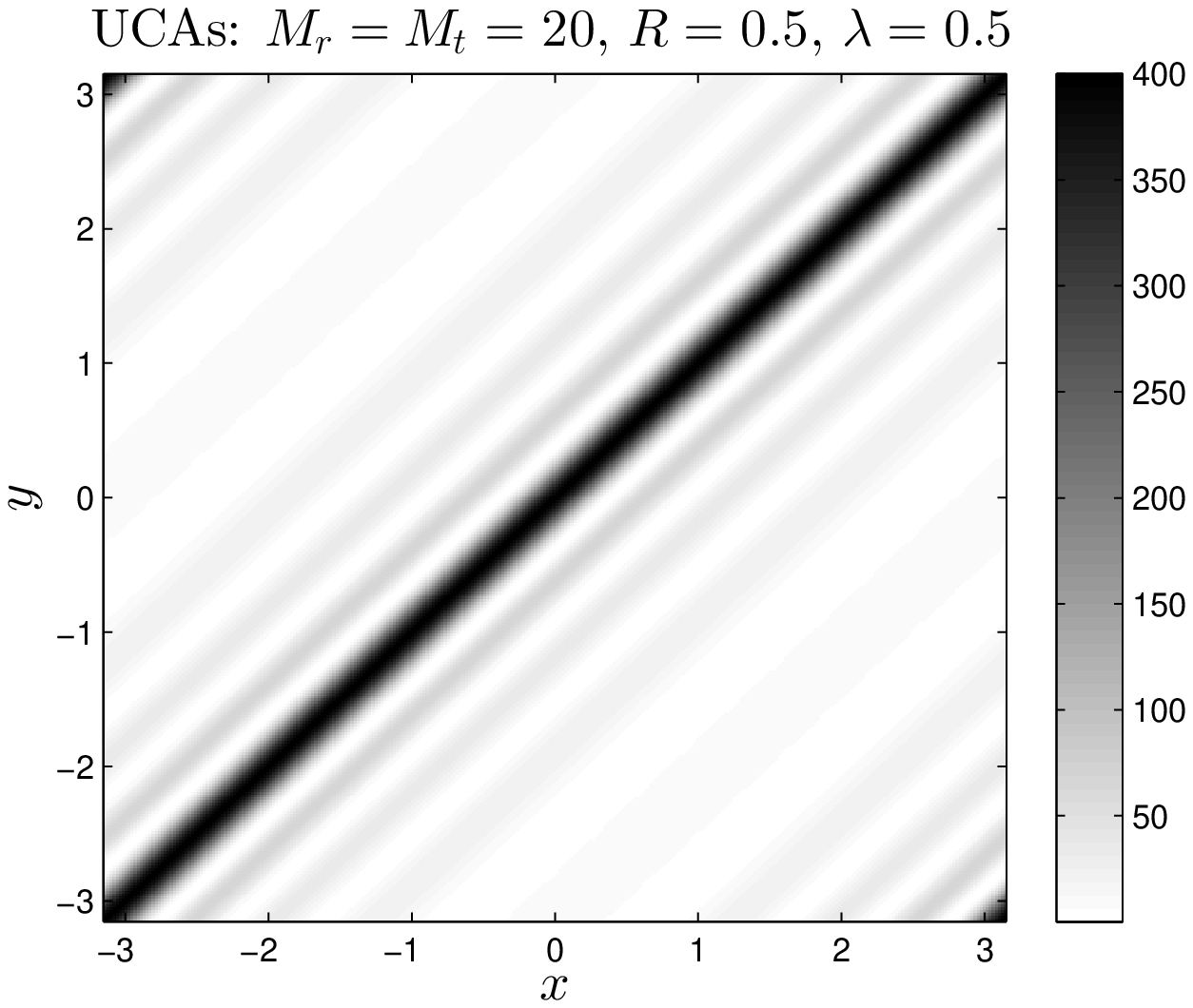}

}\subfloat[]{\includegraphics[scale=0.6]{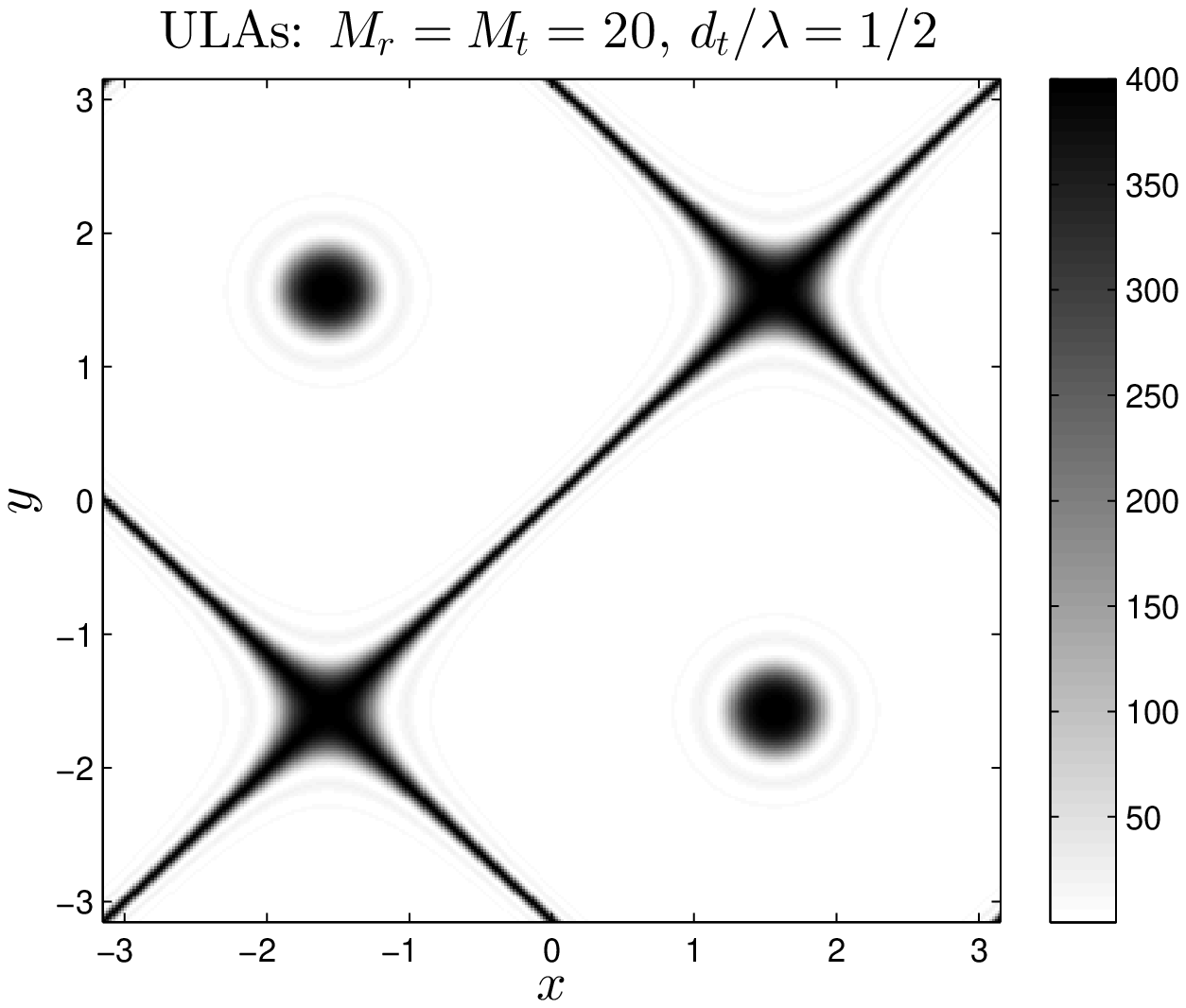}

}

\caption{\label{fig:weird}The function $\left|\varphi_{t\left(r\right)}\left(x,y\left|\mathfrak{T}\left(\mathfrak{R}\right)\right.\right)\right|^{2}$
of Example 1 with respect to $\left(x,y\right)\in\left[-\pi,\pi\right]^{2}$
(just for visualization purposes) for the case of (a) a symmetric
UCA pair and (b) a symmetric ULA pair.}
\end{figure*}
\end{rem}
\begin{example}
Consider a MIMO Radar system equipped (for simplicity) with identical
Uniform Circular Arrays (UCAs) with $M_{r}=M_{t}=20\triangleq M$
antennas, whose positions in the $2$-dimensional plane are defined
as 
\begin{equation}
\begin{bmatrix}x_{l}^{t\left(r\right)}\\
y_{l}^{t\left(r\right)}
\end{bmatrix}\triangleq R\begin{bmatrix}\cos\left(\dfrac{2\pi\left(l-1\right)}{M}\right)\\
\sin\left(\dfrac{2\pi\left(l-1\right)}{M}\right)
\end{bmatrix},
\end{equation}
$l\in\mathbb{N}_{M-1}$, where $R=0.5\, m$. The wavelength utilized
for the communication is chosen as $\lambda=0.5\, m$. Then, Theorem
6 suggests that in order to bound the coherence of the respective
matrix $\boldsymbol{\Delta}$, a uniformly sampled version of which
is available at the receiver's fusion center, we have to specify the
supremum given by \eqref{eq:strange_supremum}, which, of course,
cannot be computed analytically. Fig. \ref{fig:weird}(a) shows the
function $\left|\varphi_{t\left(r\right)}\left(x,y\left|\mathfrak{T}\left(\mathfrak{R}\right)\right.\right)\right|^{2}$
with respect to $\left(x,y\right)\in\left[-\pi,\pi\right]^{2}$. We
observe that this function is periodic in the $2$-dimensional plane,
with period equal to $2\pi$ in each dimension.

Let us now assume that 
\begin{equation}
\mathcal{A}\triangleq\left\{ \left.\left(x,y\right)\in\left[-\dfrac{\pi}{2},\dfrac{\pi}{2}\right]^{2}\right|\eta\leq\left|y-x\right|\leq\pi\right\} ,
\end{equation}
for some $\eta\in\left(0,\pi\right)$, that is, every pair of distinct
angles $\theta_{i}$ and $\theta_{j}$ are such that the magnitudes
of their differences are lying in two non intersecting halfplanes,
one above and one below the hyperplane $y=x$, within a margin of
$\eta$.

\textit{Empirically}, using the graph of $\left|\varphi_{t\left(r\right)}\left(x,y\left|\mathfrak{T}\left(\mathfrak{R}\right)\right.\right)\right|^{2}$,
we can easily specify its supremum over the set $\mathcal{A}$, which
is required in order to bound the coherence of $\boldsymbol{\Delta}$.
Additionally, it is obvious that if $\eta$ is sufficiently large,
the coherence of $\boldsymbol{\Delta}$ will be essentially relatively
small.

Interestingly, this fact draws immediate connections regarding the
coherence between the UCA and the ULA case (see Theorem 3 and Lemma
2). The function $\left|\varphi_{t\left(r\right)}\left(x,y\left|\mathfrak{T}\left(\mathfrak{R}\right)\right.\right)\right|^{2}$
resulting from the application of Theorem 6 for the ULA case is shown
in Fig. \ref{fig:weird}(b). We observe that the sufficient condition
for low coherence of $\boldsymbol{\Delta}$ presented in Lemma 2 can
be immediately validated.

Comparing the two graphs shown in Fig. \ref{fig:weird}, we can infer
that in the UCA case the allowable region for the magnitudes of the
differences of the angle pairs, given by the set $\mathcal{A}$, is
larger than the respective region for the ULA case (Lemma 2). On the
other hand, the values of $\left|\varphi_{t\left(r\right)}\left(x,y\left|\mathfrak{T}\left(\mathfrak{R}\right)\right.\right)\right|^{2}$
for the ULA case in the interior of the allowable region are considerably
smaller than the values of the respective function in the interior
of the respective set for the UCA case. Consequently, regarding our
choice of transmission/reception arrays, there is an obvious trade
- off between ``resolution'' and coherence.\hfill{}\ensuremath{\blacksquare}\end{example}
\begin{rem}
There are other fancier types of $2$-dimensional arrays that can
achieve a much better ``resolution/coherence ratio''. For instance,
we have found experimentally that carefully designed \textit{spiral
arrays}, whose antenna positions constitute a sampling of the usual
\textit{Archimedean Spiral}, possess the advantages of both the aforementioned
cases and, additionally, they exhibit very good asymptotic properties,
guaranteeing low coherence for a sufficiently large number of antennas.
\end{rem}

\section{Coherence of $\boldsymbol{\Delta}$ for the case where\protect \protect \\
 $\theta_{i}\equiv\theta_{j}$ for some $\left(i,j\right)\in\mathbb{N}_{K-1}\times\mathbb{N}_{K-1}$
with $i\neq j$}

Until now, in all the results we have presented, we have assumed that
the target angles $\theta_{k},k\in\mathbb{N}_{K-1}$ are almost surely
distinct. Apparently, this required property of distinctness is far
weaker than (statistical) independence. In fact, our results presented
in detail in the previous sections continue to hold without any modification
for any set of dependent target angles, as long as they are all different
from each other with probability $1$ (recall the Vandermonde/alternant
structure of the matrices $\mathbf{X}_{r}$ and $\mathbf{X}_{t}$
in the proofs of Theorems 3 and 6). Therefore, regarding the coherence
of $\boldsymbol{\Delta}$ for arbitrary $2$-dimensional arrays (and
thus also for ULAs), the only case that needs special treatment is
when $\theta_{i}\equiv\theta_{j}$ for some $\left(i,j\right)\in\mathbb{N}_{K-1}\times\mathbb{N}_{K-1}$
with $i\neq j$, for which we present and prove the following result. 
\begin{thm}
\textbf{\textup{(Coherence for Non - Distinct Angles)}} Consider an
arbitrary $2$-dimensional array transmitter - receiver pair equipped
with $M_{t}$ and $M_{r}$ antennas, respectively. Assume that the
set of angles $\Theta\triangleq\left\{ \theta_{k}\right\} _{k\in\mathbb{N}_{K-1}}$
is such that it can be partitioned as 
\begin{equation}
\Theta=\mathcal{D}\bigcup\mathcal{R}\quad\text{with}\quad\mathcal{D}\bigcap\mathcal{R}\equiv\textrm{Ø},
\end{equation}
where $\mathcal{D}\subseteq\Theta$ contains almost surely distinct
members and 
\begin{equation}
\mathcal{R}\triangleq\left\{ \theta\in\Theta\backslash\mathcal{D}\left|\theta\equiv\rho\text{ a.s.}\text{ for some }\rho\in\mathcal{D}\right.\right\} .
\end{equation}
Let $\boldsymbol{\Delta}$ and $\boldsymbol{\Delta}_{\mathcal{D}}$
be the matrices associated with the angle sets $\Theta$ and $\mathcal{D}$,
respectively. Then, it is true that 
\begin{equation}
\mu\left(U\right)\overset{\text{a.s.}}{\equiv}\mu\left(U_{\mathcal{D}}\right)\quad\text{and}\quad\mu\left(V\right)\overset{\text{a.s.}}{\equiv}\mu\left(V_{\mathcal{D}}\right).
\end{equation}

\end{thm}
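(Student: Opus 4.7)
The plan is to show that, almost surely, the matrices $\boldsymbol{\Delta}$ and $\boldsymbol{\Delta}_{\mathcal{D}}$ share exactly the same column and row subspaces. Once this is established, both coherence identities follow immediately: by Definition 1, $\mu\left(U\right)$ depends only on the subspace $U$, not on any particular matrix that generates it, and likewise for $\mu\left(V\right)$.

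To this end, I would first absorb the repeated angles into a lower-dimensional alternant factorization. Set $K_{\mathcal{D}}\triangleq\left|\mathcal{D}\right|$, enumerate $\mathcal{D}=\left\{ \rho_{1},\ldots,\rho_{K_{\mathcal{D}}}\right\}$, and let $\pi:\mathbb{N}_{K-1}\rightarrow\left\{ 1,\ldots,K_{\mathcal{D}}\right\}$ be the a.s.\ well-defined surjection determined by $\theta_{k}\equiv\rho_{\pi\left(k\right)}$. Let $\mathbf{P}\in\left\{ 0,1\right\} ^{K\times K_{\mathcal{D}}}$ be the selection matrix with $\mathbf{P}\left(k,\pi\left(k\right)\right)=1$ and zero elsewhere. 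A columnwise inspection of $\mathbf{X}_{r}$ and $\mathbf{X}_{t}$ (in either the Vandermonde ULA form or the general alternant form of Subsection II-C) yields $\mathbf{X}_{r}\equiv\left(\mathbf{X}_{r}\right)_{\mathcal{D}}\mathbf{P}^{\boldsymbol{T}}$ and $\mathbf{X}_{t}\equiv\left(\mathbf{X}_{t}\right)_{\mathcal{D}}\mathbf{P}^{\boldsymbol{T}}$ a.s., where $\left(\mathbf{X}_{r}\right)_{\mathcal{D}}$ and $\left(\mathbf{X}_{t}\right)_{\mathcal{D}}$ denote the alternant matrices induced by the distinct angle set $\mathcal{D}$. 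Consequently,
\begin{equation}
\boldsymbol{\Delta}\overset{\text{a.s.}}{\equiv}\left(\mathbf{X}_{r}\right)_{\mathcal{D}}\widetilde{\mathbf{D}}\left(\mathbf{X}_{t}\right)_{\mathcal{D}}^{\boldsymbol{T}},\quad\widetilde{\mathbf{D}}\triangleq\mathbf{P}^{\boldsymbol{T}}\mathbf{D}\mathbf{P},
\end{equation}
and an explicit entrywise computation shows that $\widetilde{\mathbf{D}}$ is diagonal with $\widetilde{\mathbf{D}}\left(i,i\right)=\sum_{k:\pi\left(k\right)=i}\zeta_{k}\rho_{k}$.

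The crux is then a standard linear-algebra fact: whenever $\mathbf{M}=\mathbf{A}\mathbf{B}$ with $\mathbf{B}$ of full row rank, $\mathrm{range}\left(\mathbf{M}\right)\equiv\mathrm{range}\left(\mathbf{A}\right)$. I would apply this with $\mathbf{A}\triangleq\left(\mathbf{X}_{r}\right)_{\mathcal{D}}$ and $\mathbf{B}\triangleq\widetilde{\mathbf{D}}\left(\mathbf{X}_{t}\right)_{\mathcal{D}}^{\boldsymbol{T}}$: since $\left(\mathbf{X}_{t}\right)_{\mathcal{D}}$ is an alternant matrix generated by distinct angles, it is a.s.\ of full column rank $K_{\mathcal{D}}$, and combined with the a.s.\ invertibility of $\widetilde{\mathbf{D}}$ (see the final paragraph), this makes $\mathbf{B}$ a.s.\ of full row rank. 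Hence $U\overset{\text{a.s.}}{\equiv}\mathrm{range}\left(\left(\mathbf{X}_{r}\right)_{\mathcal{D}}\right)$. Repeating the argument verbatim for $\boldsymbol{\Delta}_{\mathcal{D}}=\left(\mathbf{X}_{r}\right)_{\mathcal{D}}\mathbf{D}_{\mathcal{D}}\left(\mathbf{X}_{t}\right)_{\mathcal{D}}^{\boldsymbol{T}}$ yields $U_{\mathcal{D}}\overset{\text{a.s.}}{\equiv}\mathrm{range}\left(\left(\mathbf{X}_{r}\right)_{\mathcal{D}}\right)$ as well, so that $U\overset{\text{a.s.}}{\equiv}U_{\mathcal{D}}$ and hence $\mu\left(U\right)\overset{\text{a.s.}}{\equiv}\mu\left(U_{\mathcal{D}}\right)$. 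Applying the same argument to $\boldsymbol{\Delta}^{\boldsymbol{T}}$ and $\boldsymbol{\Delta}_{\mathcal{D}}^{\boldsymbol{T}}$ gives $V\overset{\text{a.s.}}{\equiv}V_{\mathcal{D}}$ and therefore $\mu\left(V\right)\overset{\text{a.s.}}{\equiv}\mu\left(V_{\mathcal{D}}\right)$.

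The single delicate point, and the main obstacle I anticipate, is the almost-sure invertibility of $\widetilde{\mathbf{D}}$: one needs each aggregated entry $\sum_{k:\pi\left(k\right)=i}\zeta_{k}\rho_{k}$ to be nonzero with probability one. Under the continuous-distribution convention for the reflection coefficients (and the independent Doppler-dependent phases $\rho_{k}$) implicit in the system model of Subsection II-C, the vanishing of any such sum is a zero-Lebesgue-measure event, so the argument closes as stated; if one wished to accommodate degenerate discrete distributions, the identity would instead be read on the probability-one event $\left\{ \widetilde{\mathbf{D}}\text{ is invertible}\right\}$.
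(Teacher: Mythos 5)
Your proof is correct and rests on the same core idea as the paper's: collapse the repeated columns of $\mathbf{X}_{r}$ and $\mathbf{X}_{t}$ onto the distinct-angle alternant matrices and observe that the coherence is then governed by the same object for both $\boldsymbol{\Delta}$ and $\boldsymbol{\Delta}_{\mathcal{D}}$. The execution, however, differs in a worthwhile way. The paper permutes the columns so that the distinct ones come first, takes the thin QR decomposition of the resulting rank-deficient $\widetilde{\mathbf{X}}_{r}$, and reads off that the orthonormal factor $\mathbf{V}_{r}^{\mathcal{D}}$ is exactly the orthonormal factor of $\mathbf{X}_{r}^{\mathcal{D}}$, so the row-norm suprema defining the two coherences coincide. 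You instead factor out a selection matrix $\mathbf{P}$, write $\boldsymbol{\Delta}=\left(\mathbf{X}_{r}\right)_{\mathcal{D}}\widetilde{\mathbf{D}}\left(\mathbf{X}_{t}\right)_{\mathcal{D}}^{\boldsymbol{T}}$ with an aggregated diagonal $\widetilde{\mathbf{D}}$, and invoke the facts that the range is preserved under right-multiplication by a full-row-rank matrix and that $\mu\left(\cdot\right)$ depends only on the subspace; this is cleaner and avoids the QR bookkeeping. More importantly, you isolate the one genuinely delicate point, which the paper passes over in silence: the aggregated entries $\sum_{k:\pi\left(k\right)=i}\zeta_{k}\rho_{k}$ must be nonzero, otherwise $\boldsymbol{\Delta}$ drops below rank $L$, its column space becomes a strict subspace of that of $\boldsymbol{\Delta}_{\mathcal{D}}$, and the claimed identity can fail. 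The paper simply asserts that $\boldsymbol{\Delta}$ is almost surely of rank $L$, which tacitly requires the same non-cancellation; your explicit handling of it (almost surely under a continuous law for the reflection coefficients, or by restricting to the event that $\widetilde{\mathbf{D}}$ is invertible) makes your argument strictly more complete than the paper's own.
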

\begin{proof}[Proof of Theorem 7]Consider the matrix 
\begin{equation}
\boldsymbol{\Delta}=\mathbf{X}_{r}\mathbf{D}\mathbf{X}_{t}^{\boldsymbol{T}}\in\mathbb{C}^{M_{t}\times M_{r}},
\end{equation}
where $\mathbf{X}_{r}\in\mathbb{C}^{M_{r}\times K}$, $\mathbf{X}_{t}\in\mathbb{C}^{M_{t}\times K}$
and $\mathbf{D}\in\mathbb{C}^{K\times K}$ are defined as in the proof
of either Theorem 3 or Theorem 6. Let the assumptions of the statement
of Theorem 7 hold true. Then, only $\left|\mathcal{D}\right|\triangleq L$
columns of both $\mathbf{X}_{r}$ and $\mathbf{X}_{t}$ will be almost
surely linearly independent, whereas the rest $\left|\mathcal{R}\right|\equiv K-L$
columns of both matrices constitute repetitions of the aforementioned
$L$ ones. As a result, $\boldsymbol{\Delta}$ will be almost surely
of rank $L$.

Let us re-express $\boldsymbol{\Delta}$ as 
\begin{equation}
\boldsymbol{\Delta}=\widetilde{\mathbf{X}}_{r}\mathbf{P}\mathbf{D}\mathbf{P}^{\boldsymbol{T}}\widetilde{\mathbf{X}}_{t}^{\boldsymbol{T}}\equiv\widetilde{\mathbf{X}}_{r}\widetilde{\mathbf{D}}\widetilde{\mathbf{X}}_{t}^{\boldsymbol{T}},\label{eq:D1}
\end{equation}
where $\mathbf{P}\in\mathbb{R}^{K\times K}$ constitutes an appropriately
chosen permutation matrix such that 
\begin{flalign}
\widetilde{\mathbf{X}}_{r} & \triangleq\left[\begin{array}{c|c}
\mathbf{X}_{r}^{\mathcal{D}} & \mathbf{X}_{r}^{\mathcal{R}}\end{array}\right]\quad\text{and}\label{eq:D2}\\
\widetilde{\mathbf{X}}_{t} & \triangleq\left[\begin{array}{c|c}
\mathbf{X}_{t}^{\mathcal{D}} & \mathbf{X}_{t}^{\mathcal{R}}\end{array}\right],\label{eq:D3}
\end{flalign}
where $\mathbf{X}_{r}^{\mathcal{D}}\in\mathbb{C}^{M_{r}\times L}$
(respectively for $\mathbf{X}_{t}^{\mathcal{D}}\in\mathbb{C}^{M_{t}\times L}$)
contains the columns of $\mathbf{X}_{r}$ associated with the almost
surely distinct angles in $\mathcal{D}$ and $\mathbf{X}_{r}^{\mathcal{R}}\in\mathbb{C}^{M_{r}\times K-L}$
(respectively for $\mathbf{X}_{t}^{\mathcal{R}}\in\mathbb{C}^{M_{t}\times K-L}$)
contains the rest of columns of $\mathbf{X}_{r}$, associated with
the rest of the angles in $\mathcal{R}$. Of course, $\widetilde{\mathbf{D}}\in\mathbb{C}^{K\times K}$
constitutes a non - zero - diagonal matrix.

Respectively for $\widetilde{\mathbf{X}}_{t}$, the thin QR decomposition
of $\widetilde{\mathbf{X}}_{r}$ can be expressed as 
\begin{flalign}
\widetilde{\mathbf{X}}_{r} & =\widetilde{\mathbf{V}}_{r}\widetilde{\mathbf{A}}_{r}\nonumber \\
 & \triangleq\left[\begin{array}{c|c}
\mathbf{V}_{r}^{\mathcal{D}} & \mathbf{V}_{r}^{\mathcal{R}}\end{array}\right]\left[\begin{array}{c}
\mathbf{A}_{r}^{\mathcal{D}}\\
\hline \mathbf{0}
\end{array}\right]\nonumber \\
 & \equiv\mathbf{V}_{r}^{\mathcal{D}}\mathbf{A}_{r}^{\mathcal{D}},\label{eq:BETTER}
\end{flalign}
where $\widetilde{\mathbf{V}}_{r}\in\mathbb{C}^{M_{r}\times K}$ is
such that $\widetilde{\mathbf{V}}_{r}^{\boldsymbol{\mathit{H}}}\widetilde{\mathbf{V}}_{r}\equiv\mathbf{I}_{K}$,
$\mathbf{V}_{r}^{\mathcal{D}}\in\mathbb{C}^{M_{r}\times L}$ is such
that $\left(\mathbf{V}_{r}^{\mathcal{D}}\right)^{\boldsymbol{\mathit{H}}}\mathbf{V}_{r}^{\mathcal{D}}\equiv\mathbf{I}_{L}$,
$\widetilde{\mathbf{A}}_{r}\in\mathbb{C}^{K\times K}$ is a specially
structured singular upper triangular matrix (see expressions above)
of rank $L$ and $\mathbf{A}_{r}^{\mathcal{D}}\in\mathbb{C}^{L\times K}$
constitutes a rectangular matrix also of rank $L$.

Let $\left(\mathbf{V}_{r\left(t\right)}^{\mathcal{D}}\right)^{i}\in\mathbb{C}^{1\times L}$,
$i\in\mathbb{N}_{M_{r\left(t\right)}}^{+}$ denote the $i$-th row
of $\mathbf{V}_{r\left(t\right)}^{\mathcal{D}}$. Then, based on \eqref{eq:BETTER}
and using an almost identical procedure as in the first part of the
proof of Theorem 3, we can directly show that the coherence of the
column space of $\boldsymbol{\Delta}$, $U$, is given by 
\begin{flalign}
\mu\left(U\right) & =\frac{M_{r}}{L}\sup_{i\in\mathbb{N}_{N}^{+}}\left\Vert \left(\mathbf{V}_{r}^{\mathcal{D}}\right)^{i}\right\Vert _{2}^{2}.
\end{flalign}
Respectively, the coherence of the row space of of $\boldsymbol{\Delta}$,
$V$, is given by 
\begin{equation}
\mu\left(V\right)=\frac{M_{t}}{L}\sup_{i\in\mathbb{N}_{N}^{+}}\left\Vert \left(\mathbf{V}_{t}^{\mathcal{D}}\right)^{i}\right\Vert _{2}^{2}.
\end{equation}
Now, consider the almost surely rank-$L$ matrix 
\begin{equation}
\boldsymbol{\Delta}_{\mathcal{D}}\triangleq\mathbf{X}_{r}^{\mathcal{D}}\mathbf{D}^{\mathcal{D}}\mathbf{X}_{t}^{\mathcal{D}}\in\mathbb{C}^{M_{t}\times M_{r}},
\end{equation}
where the diagonal matrix $\mathbf{D}^{\mathcal{D}}\in\mathbb{C}^{L\times L}$
is defined in an obvious way (see \eqref{eq:D1} - \eqref{eq:D3}).
Then, by construction of the QR decomposition, it must be true that
\begin{equation}
\mathbf{X}_{r}^{\mathcal{D}}=\mathbf{V}_{r}^{\mathcal{D}}\widehat{\mathbf{A}}_{r}^{\mathcal{D}}\quad\text{and}\quad\mathbf{X}_{t}^{\mathcal{D}}=\mathbf{V}_{t}^{\mathcal{D}}\widehat{\mathbf{A}}_{t}^{\mathcal{D}},
\end{equation}
where, $\widehat{\mathbf{A}}_{t}^{\mathcal{D}}\in\mathbb{C}^{L\times L}$
and $\widehat{\mathbf{A}}_{r}^{\mathcal{D}}\in\mathbb{C}^{L\times L}$
constitute almost surely full rank matrices. Recall \eqref{eq:RECALL}
and adapt it appropriately for expressing the coherences of the column
and row spaces of $\boldsymbol{\Delta}_{\mathcal{D}}$, $U_{\mathcal{D}}$
and $V_{\mathcal{D}}$, respectively. Our claims follow.\end{proof} 
\begin{rem}
Indeed, in many applications (Radar and non - Radar), it is very likely
that the situations Theorem 7 is dealing with may correspond to events
of zero measure. However, we feel that stating this result is important,
in order for the coherence analysis we presented in the previous chapters
to be entirely complete. Also, interestingly, the simple implication
of Theorem 7 is somewhat not intuitive: As the difference between
two angles is decreasing, the coherence of $\boldsymbol{\Delta}$
becomes worse (that is, increases), but if the difference is exactly
zero, then the coherence of $\boldsymbol{\Delta}$ is identical to
the coherence of the variation of $\boldsymbol{\Delta}$ resulting
from simply removing the information corresponding to any of the two
identical angles. Clearly, this result does not in any way contradict
the ones presented in the previous sections. 
\end{rem}

\section{Discussion and Some Simulations}

In this section, we present some simulations, validating our main
results presented above and discuss some further implications of our
respective theoretical analysis.

For simplicity, we consider a MIMO radar system equipped with identical
ULAs for transmission and reception, with $d_{r}\equiv d_{t}=\lambda/2$
and $M_{r}\equiv M_{t}\triangleq M$. We also consider $K=4$ targets
in the far field, with angles independently distributed in $\left(-\dfrac{\pi}{2},\dfrac{\pi}{2}\right)$.
\begin{figure}
\centering\includegraphics[scale=0.6]{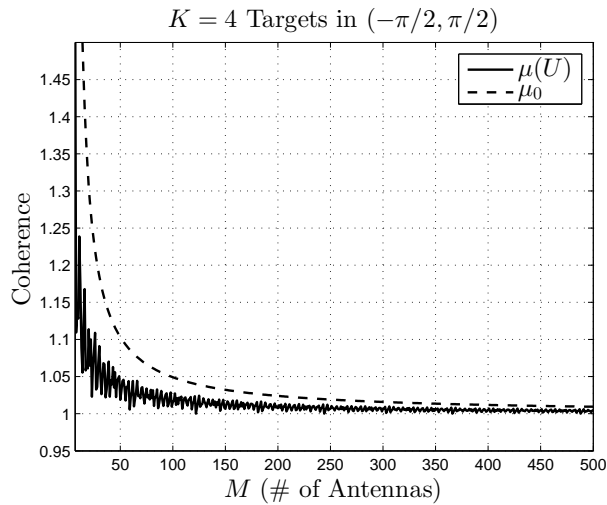}\caption{\label{fig:simple_bound}Behavior of the coherence of $\boldsymbol{\Delta}$
and the proposed bound with respect to the number of antennas $M$.}
\end{figure}

In this synthetic example, in order to demonstrate the validity of
Theorem 3, we assume that the target angles are known apriori. By
Lemma 2, we know that the condition $\xi\neq0$ will always hold and
consequently the asymptotics of Theorem 3 must also hold true. Of
course, $\xi$ can be computed either using \eqref{eq:xixixixi} or
\eqref{eq:xi_worstcase} (for some sufficiently chosen value of $\eta$),
with the latter producing a worst - case bound regarding the coherence
of the associated matrix $\boldsymbol{\Delta}$. Also, in this example,
we obviously have $\mu\left(U\right)\equiv\mu\left(V\right)$.

Using \eqref{eq:xixixixi} for the computation of $\xi$, Fig. \ref{fig:simple_bound}
shows the behavior of both the coherence of $\boldsymbol{\Delta}$
and its bound $\mu_{0}$, which results by directly applying Theorem
3, as a function of the number of transmission/reception antennas
$M$. We observe that, clearly, the proposed bound is tight and it
tracks the convergence of the coherence of $\boldsymbol{\Delta}$
to unity very accurately. We should mention here that for very small
angle differences, the bound becomes somewhat looser. However, our
numerical simulations have shown that our bound constitutes a very
reasonable coherence estimate in all cases. 
\begin{figure}
\centering\includegraphics[scale=0.6]{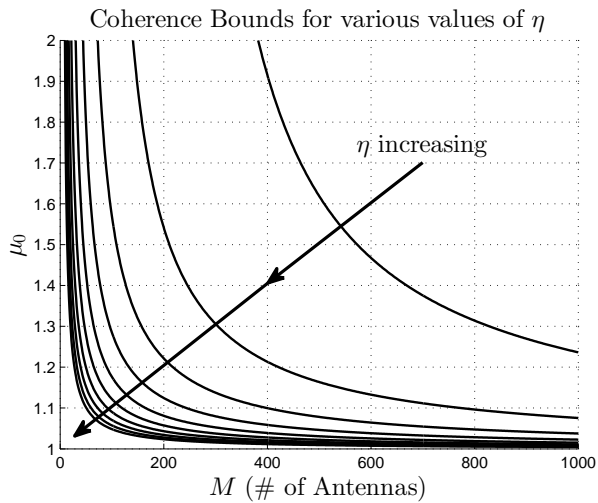}\caption{\label{fig:ETA_bound}Our proposed bound with respect to the number
of antennas $M$, for different values of $\eta$, according to Lemma
2.}
\end{figure}

Apparently, in any realistic situation, the actual values of the target
angles are unknown. Assuming generically that each of the angle differences
belongs to a set given by \eqref{eq:SET_LEMMA2} for some $\eta\in\left(0,\dfrac{\pi}{2}\right)$
-$\eta$ depends on the specific radar application-, we can invoke
Lemma 2 in order to bound the coherence of $\boldsymbol{\Delta}$
in this more general case. Fig. \ref{fig:ETA_bound} depicts a number
of bounds produced by Lemma 2 for various values of $\eta$, as functions
of the number of antennas $M$. One can observe that, as the value
of $\eta$ increases, the respective coherence bound converges much
faster to unity, therefore increasing our confidence that the performance
of matrix completion will be satisfactory for a relatively smaller
number of transmission/reception antennas.
\begin{figure*}
\centering\subfloat[]{\includegraphics[scale=0.6]{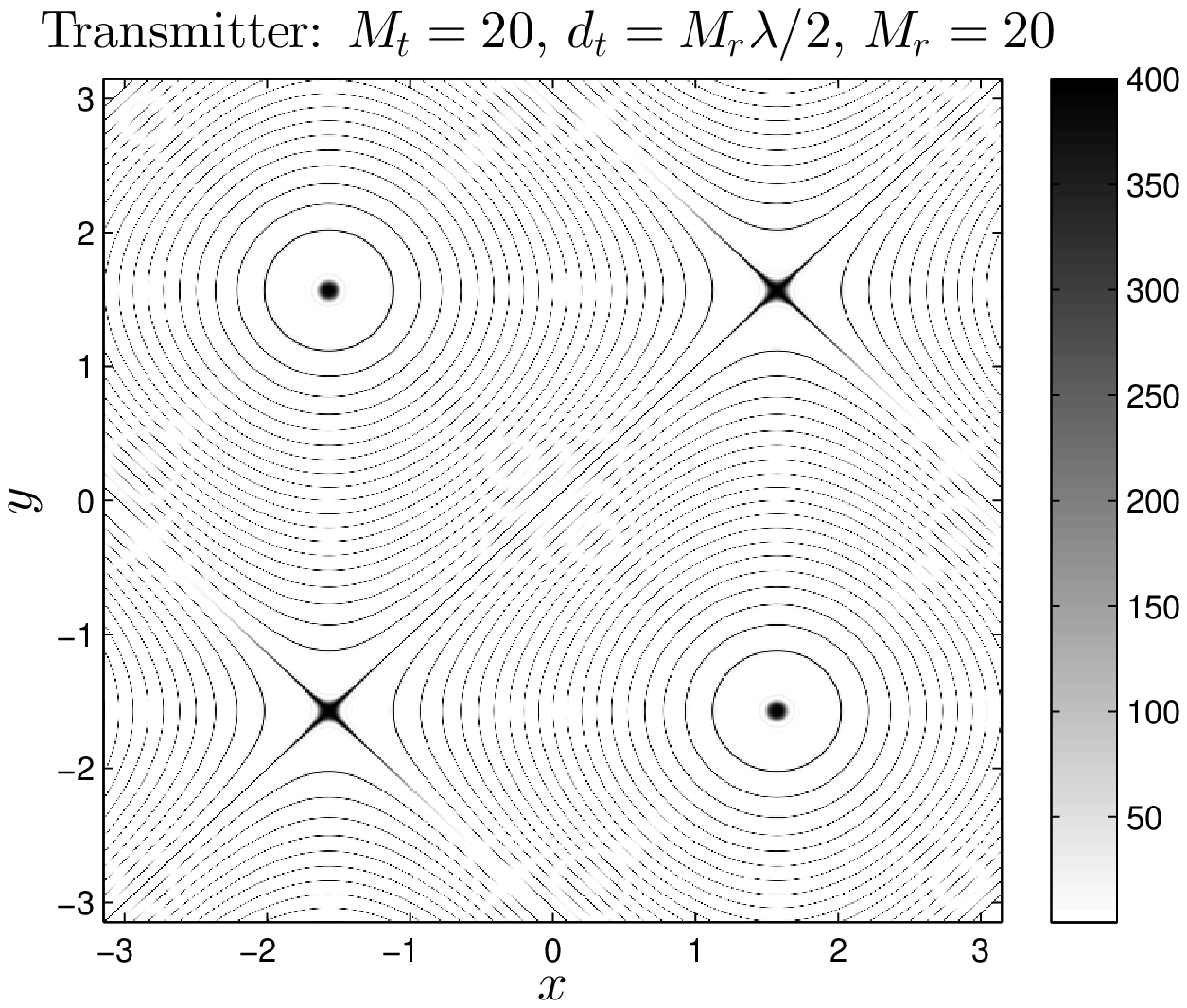}

}\subfloat[]{\includegraphics[scale=0.6]{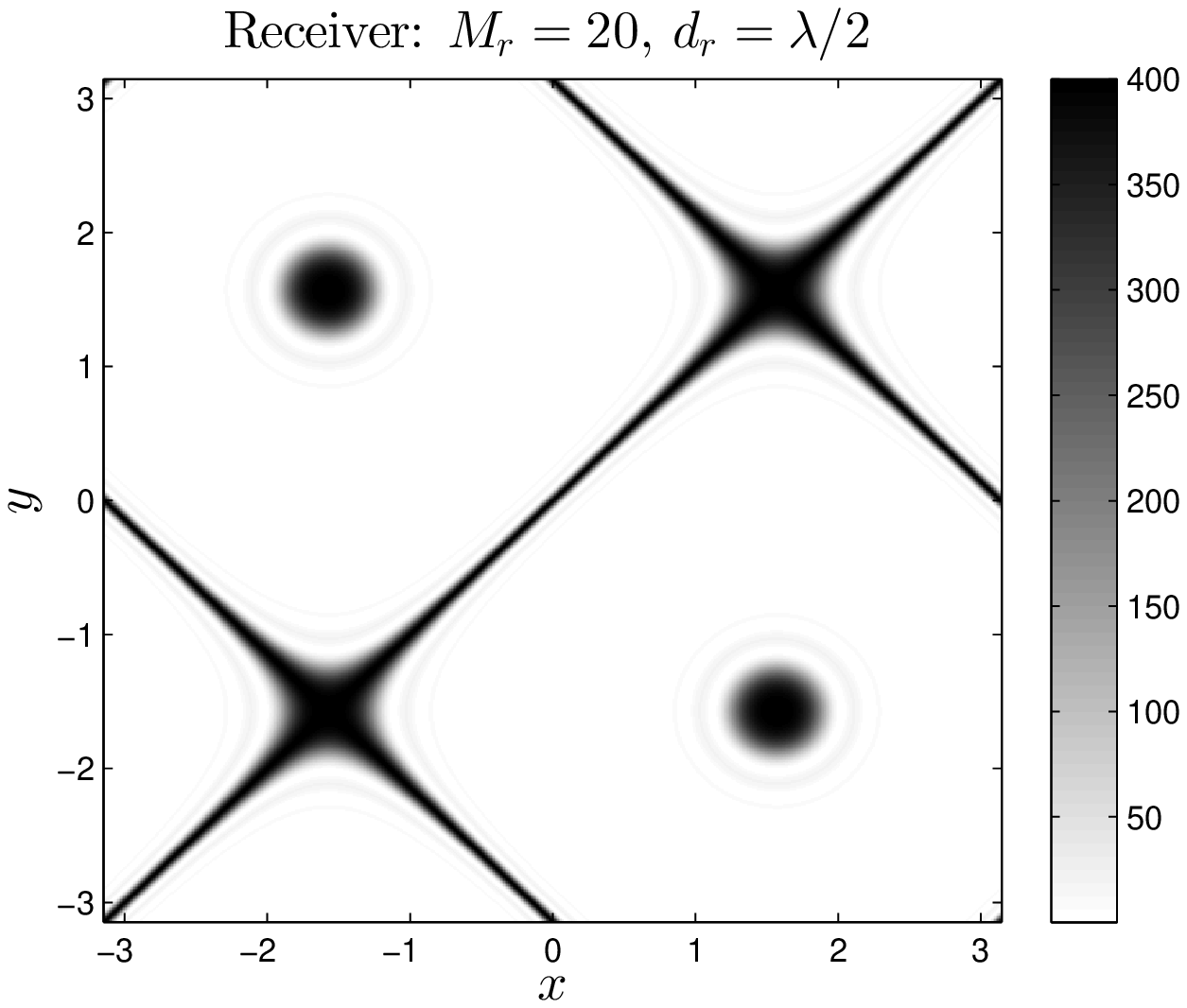}

}

\caption{\label{fig:weird-1}The function $\left|\varphi_{t\left(r\right)}\left(x,y\left|\mathfrak{T}\left(\mathfrak{R}\right)\right.\right)\right|^{2}$
with respect to $\left(x,y\right)\in\left[-\pi,\pi\right]^{2}$ for
(a) the transmission array ($\mathfrak{T}$) and (b) the reception
array ($\mathfrak{R}$) of the high degree of freedom MIMO radar considered
in this case.}
\end{figure*}

However, we should stress that Lemma 2 essentially produces worst
case bounds, which might be too pessimistic for specific applications.
If this is the case, one could use Theorem 6, which provides greater
versatility for deriving more accurate bounds for more specific target
configurations and for given number of antennas or, alternatively,
one could derive variants of Lemma 2 for special target configurations.

Another important ULA configuration in MIMO radar systems results
by choosing $d_{t}=M_{r}\lambda/2$ and $d_{r}=\lambda/2$. It is
well known that this particular choice for the transmitter antenna
spacing increases the degrees of freedom of the system to $M_{t}M_{r}$
using only $M_{t}+M_{r}$ physical antenna elements and consequently,
higher spatial resolution can be obtained (see, for instance, \cite{ChenVaidyanathan2008}).
If we are interested in the performance of matrix completion, as far
as $\mu\left(V\right)$ is concerned, associated solely with the reception
array, we can directly use Theorem 3 in conjunction with Lemma 2 as
explained above in order to derive the respective coherence bound.
However, the behavior of $\mu\left(U\right)$ is much more complicated
and the application of Theorem 3 leads to an unreasonably involved
analysis. But one could invoke Theorem 6 in this case. Fig. \ref{fig:weird-1}(a)
shows the function $\left|\varphi_{t}\left(x,y\left|\mathfrak{T}\right.\right)\right|^{2}$
with respect to $\left(x,y\right)\in\left[-\pi,\pi\right]^{2}$. Assuming
that the angles are independently and uniformly distributed in $\left(-\dfrac{\pi}{2},\dfrac{\pi}{2}\right)$
and that the contours of $\left|\varphi_{t}\left(x,y\left|\mathfrak{T}\right.\right)\right|^{2}$
are sufficiently dense (for $M_{r}$ properly chosen), then the respective
values of the aforementioned function will be essentially small, since,
technically speaking, the probability of any $\left(\theta_{i},\theta_{j}\right)$
belonging to the union of the strict subsets of $\left(-\dfrac{\pi}{2},\dfrac{\pi}{2}\right)^{2}$
corresponding to the contours of $\left|\varphi_{t}\left(x,y\left|\mathfrak{T}\right.\right)\right|^{2}$
shown in Fig. \ref{fig:weird-1}(a) will be essentially zero. Therefore,
a low value for $\mu\left(U\right)$ is guaranteed with very high
probability.

\section{Conclusion}

In this paper, we have presented a detailed analysis regarding the
recoverability of the data matrix in colocated MIMO radar systems
via convex optimization (matrix completion), for a number of commonly
used array configurations. We showed that, in most cases, the data
matrix is indeed recoverable from a minimal number of observations,
as long as the number of transmission and reception antennas is sufficiently
large and under common assumptions on the DOAs of the targets. Consequently,
the matrix completion approach for reducing the sampling requirements
in colocated MIMO radar is indeed theoretically robust and also appealing
for practical consideration in real world applications. However, although
we have explicitly shown that the choice of ULAs for transmission
and reception leads to asymptotically and approximately optimal coherence
of the data matrix, the important and more general problem of optimally
choosing the array topologies for minimizing matrix coherence still
remains open, clearly suggesting new directions for further research.

\section*{Appendix A\protect \\
Proof of Lemma 1}

Let the hypothesis of the statement of Lemma 1 hold true. In the following,
we consider only the column space of $\mathbf{M}$, $U$, since, for
$V$ the procedure is identical. First, consider the set $\mathcal{G}\triangleq\left\{ \left(i,j\right)\in\mathbb{N}_{N}^{+}\times\mathbb{N}_{N}^{+}\left|i\equiv j\right.\right\} $.
Then, 
\begin{flalign}
 & \sup_{\left(i,j\right)\in\mathcal{G}}\left|\frac{N}{r}\left\langle \mathbf{e}_{i},\mathbf{P}_{U}\mathbf{e}_{j}\right\rangle -\mathds{1}_{i=j}\right|\nonumber \\
\equiv & \sup_{i\in\mathbb{N}_{N}^{+}}\frac{N}{r}\left\Vert \mathbf{P}_{U}\mathbf{e}_{i}\right\Vert _{2}^{2}-1\le\mu_{0}-1.
\end{flalign}
Next, consider the set $\mathcal{H}\triangleq\mathbb{N}_{N}^{+}\times\mathbb{N}_{N}^{+}-\mathcal{G}$.
In this case, 
\begin{flalign}
 & \sup_{\left(i,j\right)\in\mathcal{H}}\left|\frac{N}{r}\left\langle \mathbf{e}_{i},\mathbf{P}_{U}\mathbf{e}_{j}\right\rangle -\mathds{1}_{i=j}\right|\nonumber \\
\equiv & \sup_{\left(i,j\right)\in\mathcal{H}}\frac{N}{r}\left|\left\langle \mathbf{P}_{U}\mathbf{e}_{i},\mathbf{P}_{U}\mathbf{e}_{j}\right\rangle \right|\nonumber \\
\le & \sup_{\left(i,j\right)\in\mathcal{H}}\frac{N}{r}\left\Vert \mathbf{P}_{U}\mathbf{e}_{i}\right\Vert _{2}\left\Vert \mathbf{P}_{U}\mathbf{e}_{j}\right\Vert _{2}\le\mu_{0}.
\end{flalign}
In all cases, it is true that 
\begin{equation}
\mu_{s}\left(U\right)\le\mu_{0}\quad\text{and}\quad\mu_{s}\left(V\right)\le\mu_{0}.
\end{equation}
Consequently, the assumption $\mathbf{A2}$ is trivially satisfied
with 
\begin{equation}
\max\left\{ \mu_{s}\left(U\right),\mu_{s}\left(V\right)\right\} \le\mu_{0}
\end{equation}
and then we can take $\mu_{0}^{s}\le\mu_{0}\sqrt{r}$. Also, $\mu_{1}\le\mu_{0}\sqrt{r}$
\cite{Candes&Recht2009,Candes&Tao2010}. Choosing $\mu\triangleq\max\left\{ \mu_{0}^{s},\mu_{1}\right\} $
completes the proof.\hfill{}\ensuremath{\qed}

\section*{Appendix B\protect \protect \\
 Proof of Lemma 2}

Let the hypotheses of the statement of Lemma 2 hold true and consider
a bivariate function $f:\mathcal{A}\rightarrow\mathcal{F}\subset\left(0,2\right)$
defined as 
\begin{equation}
f\left(x,y\right)\triangleq\left|\sin\left(x\right)-\sin\left(y\right)\right|,
\end{equation}
obviously bounded from above and below as 
\begin{equation}
\inf_{\left(x,y\right)\in\mathcal{A}}f\left(x,y\right)\leq f\left(x,y\right)\leq\sup_{\left(x,y\right)\in\mathcal{A}}f\left(x,y\right).\label{eq:ineq1}
\end{equation}
In the following, our goal will be to explicitly specify the infimum
and the supremum appearing in \eqref{eq:ineq1}, or equivalently the
set $\mathcal{F}$.

Consider the case where $\left(x,y\right)\in\left\{ \left(x,y\right)\in\mathcal{A}\left|y>x\right.\right\} \triangleq\mathcal{A}_{+}$.
Thus, 
\begin{equation}
f\left(x,y\right)=\sin\left(y\right)-\sin\left(x\right)\triangleq f_{+}\left(x,y\right).
\end{equation}
Also, let us define the set 
\begin{equation}
\mathcal{L}_{\beta}\triangleq\left\{ \left(x,y\right)\in\mathcal{A}_{+}\left|y=x+\beta,\beta\in\left[\eta,\pi-\eta\right]\right.\right\} .
\end{equation}
Then, restricted to $\mathcal{L}_{\beta}$, the function $f_{+}\left(x,y\right)$
can be expressed as 
\begin{equation}
f_{+}\left(x,x+\beta\right)=\sin\left(x+\beta\right)-\sin\left(x\right)\triangleq f_{+}\left(x;\beta\right),
\end{equation}
where $x\in\left[-\dfrac{\pi}{2},\dfrac{\pi}{2}-\beta\right]$. By
a simple second derivative test, it can be shown that $f_{+}\left(x;\beta\right)$
is a strictly concave function in the set that it is defined. It is
then very easy to show that $f_{+}\left(x;\beta\right)$ is maximized
at $\widehat{x}\equiv-\dfrac{\beta}{2}$ over the set $\left[-\dfrac{\pi}{2},\dfrac{\pi}{2}-\beta\right]$,
whereas its infimum occurs at either of the boundaries of the feasible
set, say, at $\widetilde{x}\equiv-\dfrac{\pi}{2}$. Thus, trivially,
the maximum of $f_{+}\left(x,y\right)$ over the set $\mathcal{L}_{\beta}$
occurs at the point 
\begin{flalign}
\left(\widehat{x},\widehat{y}\right) & \equiv\left(-\dfrac{\beta}{2},\dfrac{\beta}{2}\right),\quad\text{with}\\
\max_{\left(x,y\right)\in\mathcal{L}_{\beta}}f_{+}\left(x,y\right) & \equiv2\sin\left(\dfrac{\beta}{2}\right),\quad\forall\,\beta\in\left[\eta,\pi-\eta\right],
\end{flalign}
whereas its infimum occurs at 
\begin{flalign}
\left(\widetilde{x},\widetilde{y}\right) & \equiv\left(-\dfrac{\pi}{2},\beta-\dfrac{\pi}{2}\right),\quad\text{with}\\
\inf_{\left(x,y\right)\in\mathcal{L}_{\beta}}f_{+}\left(x,y\right) & \equiv1-\cos\left(\beta\right),\quad\forall\,\beta\in\left[\eta,\pi-\eta\right].
\end{flalign}
Apparently, both the maximum and infimum of $f_{+}\left(x,y\right)$
over $\mathcal{L}_{\beta}$ constitute strictly increasing functions
of $\beta$, which directly implies that 
\begin{equation}
\sup_{\left(x,y\right)\in\mathcal{A}_{+}}f_{+}\left(x,y\right)\equiv2\sin\left(\dfrac{\pi-\eta}{2}\right)\equiv2\cos\left(\dfrac{\eta}{2}\right)
\end{equation}
and 
\begin{equation}
\inf_{\left(x,y\right)\in\mathcal{A}_{+}}f_{+}\left(x,y\right)\equiv1-\cos\left(\eta\right).
\end{equation}
Further, observe that $f\left(x,y\right)$ for $\left(x,y\right)\in\mathcal{A}$
is symmetric with respect to its main diagonal. Consequently, it must
be true that 
\begin{flalign}
\sup_{\left(x,y\right)\in\mathcal{A}}f\left(x,y\right) & \equiv2\cos\left(\dfrac{\eta}{2}\right)\quad\text{and}\\
\inf_{\left(x,y\right)\in\mathcal{A}}f\left(x,y\right) & \equiv1-\cos\left(\eta\right),\quad\forall\,\eta\in\left(0,\dfrac{\pi}{2}\right].
\end{flalign}
Therefore, $\forall\,\left(x,y\right)\in\mathcal{A}$, 
\begin{equation}
f\left(x,y\right)\in\left[1-\cos\left(\eta\right),2\cos\left(\dfrac{\eta}{2}\right)\right]\equiv\mathcal{F}.\label{eq:okokok}
\end{equation}
Now, substituting $x$ and $y$ with $\theta_{i}$ and $\theta_{j}$,
respectively, \eqref{eq:okokok} yields 
\begin{equation}
\dfrac{1-\cos\left(\eta\right)}{2}\leq\dfrac{1}{2}\left|\sin\left(\theta_{i}\right)-\sin\left(\theta_{j}\right)\right|\leq\cos\left(\dfrac{\eta}{2}\right),
\end{equation}
holding true $\forall\,\left(i,j\right)\in\mathbb{N}_{K-1}\times\mathbb{N}_{K-1}$
with $i\neq j$ and $\forall\,\eta\in\left(0,\dfrac{\pi}{2}\right)$,
where, in general, 
\begin{flalign}
\dfrac{1-\cos\left(\eta\right)}{2} & \in\left(0,\dfrac{1}{2}\right]\quad\text{and}\\
\cos\left(\dfrac{\eta}{2}\right) & \in\left[\dfrac{\sqrt{2}}{2},1\right).
\end{flalign}
As a result, for a fixed $\eta$ (recall the definition of $\xi$
in the statement of Theorem 3), 
\begin{flalign}
\xi & =\min_{\substack{\left(i,j\right)\in\mathbb{N}_{K-1}\times\mathbb{N}_{K-1}\\
i\neq j
}
}g\left(\dfrac{1}{2}\left|\sin\left(\theta_{i}\right)-\sin\left(\theta_{j}\right)\right|\right)\nonumber \\
 & =\min\left\{ \dfrac{1-\cos\left(\eta\right)}{2},g\left(\cos\left(\dfrac{\eta}{2}\right)\right)\right\} \nonumber \\
 & \equiv\min\left\{ \dfrac{1-\cos\left(\eta\right)}{2},1-\cos\left(\dfrac{\eta}{2}\right)\right\} \nonumber \\
 & \equiv1-\cos\left(\dfrac{\eta}{2}\right)\in\left(0,\dfrac{2-\sqrt{2}}{2}\right],
\end{flalign}
thus completing the proof.\hfill{}\ensuremath{\qed}

 \bibliographystyle{IEEEtran}
\bibliography{IEEEabrv}

\end{document}